\newtheorem{myThm}{Theorem}
\newcommand{\mathds}[1]{\text{\usefont{U}{dsrom}{m}{n}#1}}
\newtheoremstyle{example}
  {3pt} 
  {3pt} 
  {} 
  {} 
  {\bfseries} 
  {.} 
  {.5em} 
  {} 
\theoremstyle{example}
\newtheorem{example}{Example}
\setlist[itemize]{leftmargin=*}
\newtheoremstyle{Definition}
  {3pt} 
  {3pt} 
  {} 
  {} 
  {\bfseries} 
  {.} 
  {.5em} 
  {} 
\theoremstyle{Definition}
\newtheorem{myDef}{Definition}
\newtheoremstyle{Property}
  {3pt} 
  {3pt} 
  {} 
  {} 
  {\bfseries} 
  {.} 
  {.5em} 
  {} 
\theoremstyle{Property}
\newtheorem{myProp}{Property}
\newtheoremstyle{Proposition}
  {3pt} 
  {3pt} 
  {} 
  {} 
  {\bfseries} 
  {.} 
  {.5em} 
  {} 
\theoremstyle{Proposition}
\newtheoremstyle{Problem}
  {3pt} 
  {3pt} 
  {} 
  {} 
  {\bfseries} 
  {.} 
  {.5em} 
  {} 
\theoremstyle{Problem}
\newtheorem{Prob}{Problem}
\begin{document}

\title{Snoopy: Effective and Efficient Semantic Join Discovery via Proxy Columns}

\author{Yuxiang Guo,
        Yuren Mao,
        Zhonghao Hu,
        Lu~Chen,
        Yunjun~Gao, ~\IEEEmembership{Senior Member,~IEEE}
       
\thanks{Y. Guo,  L. Chen and Y. Gao are with the College of Computer Science, Zhejiang University, Hangzhou 310027, China (e-mail: guoyx@zju.edu.cn; luchen@zju.edu.cn; gaoyj@zju.edu.cn).}
\thanks{Y. Mao, Z. Hu are with the School of Software Technology, Zhejiang University, Ningbo 315048, China (e-mail: yuren.mao@zju.edu.cn; zhonghao.hu@zju.edu.cn).}
\thanks{*Corresponding author: Yunjun Gao (e-mail: gaoyj@zju.edu.cn)}
}



\maketitle

\begin{abstract}
Semantic join discovery, which aims to find columns in a table repository with high  semantic joinabilities to a given query column, plays an essential role in dataset discovery.
Existing methods can be divided into two categories: cell-level methods and column-level methods. However, both of them cannot simultaneously ensure proper effectiveness and efficiency. 
Cell-level methods, which compute the joinability by counting cell matches between columns, enjoy ideal effectiveness but suffer poor efficiency. In contrast, column-level methods, which determine joinability only by computing the similarity of column embeddings, enjoy proper efficiency but suffer poor effectiveness due to the issues occurring in their column embeddings: (i) semantics-joinability-gap, (ii) size limit, and (iii) permutation sensitivity. To address these issues, this paper proposes to compute column embeddings via proxy columns; furthermore, a novel column-level 
semantic join  discovery framework, \textsf{Snoopy}, is presented, leveraging proxy-column-based embeddings to bridge effectiveness and efficiency. Specifically, the proposed column embeddings are derived from the implicit column-to-proxy-column relationships, which are captured by the lightweight approximate-graph-matching-based column projection. To acquire good proxy columns for guiding the column projection, we introduce a rank-aware contrastive learning paradigm.
Extensive experiments on four real-world datasets demonstrate that \textsf{Snoopy} outperforms SOTA column-level methods by 16\% in Recall@25 and 10\% in NDCG@25, and achieves superior efficiency—being at least 5 orders of magnitude faster than cell-level solutions, and 3.5x faster than existing column-level methods.
\end{abstract}

\begin{IEEEkeywords}
 Semantic Join Discovery, Similarity Search, Proxy Columns, Representation Learning
\end{IEEEkeywords}

\section{Introduction}
\label{sec:intro}

\IEEEPARstart{T}he drastic growth of open and shared datasets (e.g., government open data)
has brought unprecedented opportunities for data analysis. However, when confronted with massive datasets, users often struggle to find 
relevant ones for their specific requirements~\cite{WarpGate}.
Hence, the data management community has been developing dataset discovery systems to find related tables given a query table~\cite{TabelDiscovery}.

In this work, we focus on semantic join discovery.
Given a table repository $\mathcal{T}$, a query table $T_Q$ with the specified column $C_Q$,  semantic join discovery aims to find column $C$  from  $\mathcal{T}$ that exhibits a large number of semantically matched cells between column $C$ and $C_Q$. Thus, the table with the column $C$ can be joined with the query table $T_Q$ to enrich the features for data analysis/machine learing tasks.


\begin{figure}
  \centering
  \includegraphics[width=1\linewidth]{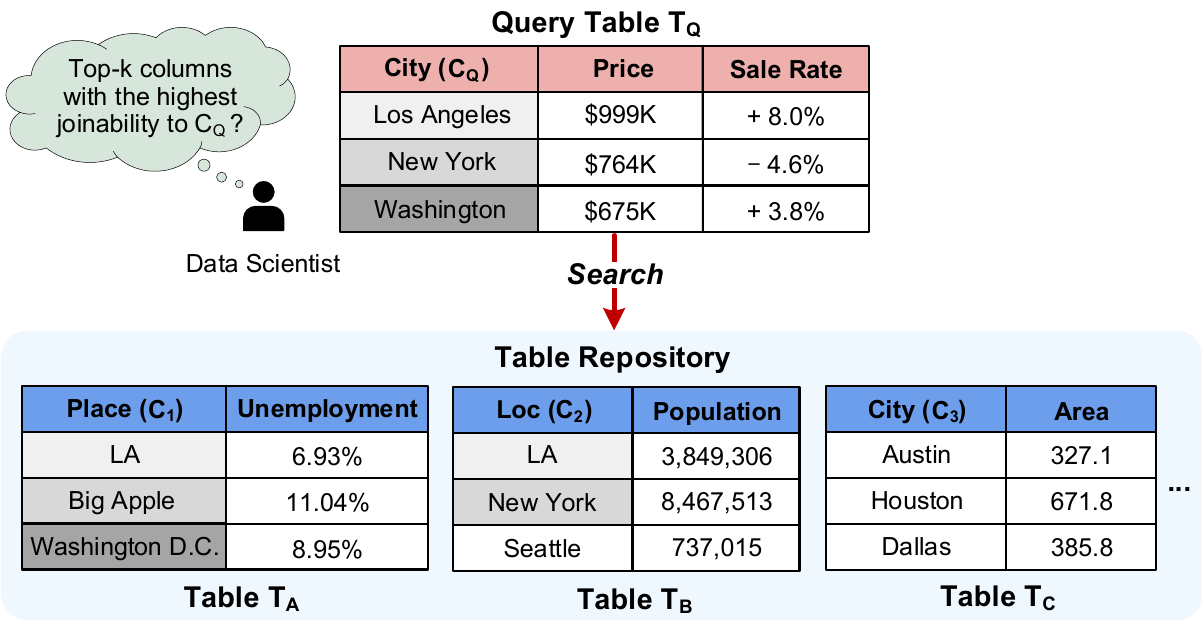} \vspace{-5mm}
   \caption{An example of semantic join discovery. The cells in columns $C_Q$, $C_1$ and $C_2$ with the same grayscale are matched.}
  \label{fig:exm1}
 \vspace{-6mm}
\end{figure}

\begin{example}\label{example-1}
As illustrated in Fig.~\ref{fig:exm1}, a data scientist wants to analyze the housing sales rate of the cities listed in Table $T_Q$. To enhance the understanding of these cities, she/he searches for tables with high joinability to Table $T_Q$ on the \texttt{City} column ($C_Q$). 
Since three (resp. two) cells in  $C_Q$ can find matched ones 
in $C_1$ (resp. $C_2$), the joinability of $C_Q$ and $C_1$ is $\frac{3}{3}$, while for $C_Q$ and $C_2$, it is $\frac{2}{3}$.
An ideal join discovery method returns $C_1$ as the top-1 column (assuming that only top-1 is requested). Then, the data scientist can employ any join algorithms~\cite{semanticjoin,autofuzzyjoin,Auto-Join} to merge the rows in Table $T_Q$ and $T_A$ based on the joinable columns, so that the \texttt{Unemplpyment} rate can be obtained for enhanced data analysis.
\end{example}



\definecolor{DeepGreen}{RGB}{0, 88, 36}



Most existing join discovery methods~\cite{JOSIE,LSH,DatasetDiscovery,CrossDataDis} limit the join type to the equi-join, overlooking numerous semantically matched cells (e.g., ``New York" semantically matches ``Big Apple"),  thus underestimating column joinabilities. To take semantics into consideration, 
cell-level semantic join discovery methods (e.g. PEXESO~\cite{Pexeso}) determine cell matching via cell embeddings, and compute the joinability based on the cell matching results. 
They are exact solutions under the proposed semantic joinability measure~\cite{Deepjoin,Pexeso}, but require online evaluation of all cell pairs between the query column and each repository column to assess column-to-column (c2c) joinabilities, as shown in Fig.~\ref{fig:exm2}(a).
Thus, they show ideal effectiveness but poor efficiency.
To enhance efficiency,  column-level  solutions~\cite{Deepjoin,WarpGate} propose to encode the entire column to a fixed-dimensional embedding via transformer-based pre-trained models (PTMs), as shown in Fig.~\ref{fig:exm2}(b),
and returns approximate results by performing similarity search on the column embeddings. While efficiency is improved, they suffer poor effectiveness due to the following reasons.
\begin{itemize}  
\item{} \textbf{Semantics-joinability-gap}: PTMs (e.g. SBERT adopted by DeepJoin~\cite{Deepjoin}) encode each column to an embedding independently, as shown in Fig.~\ref{fig:exm2}(b), without essential c2c interactions. This results in column embeddings inferior in capturing the cell information between columns, but focusing more on the semantic type of the input column~\cite{Watchog}. However, similarity in column semantic type does not necessarily imply high joinability. For instance, columns $C_Q$ and $C_3$ in Fig.~\ref{fig:exm1} share the same semantic type (\texttt{City}) but are not joinable, as no cells can be semantically matched.
\item{} \textbf{Size limit}: existing column-level methods concatenate cells within a column into a single sequence as input for PTMs. However, transformer-based PTMs typically impose an input size constraint, resulting in the truncation  of cells that exceed this limit, even if they contribute to the joinability. 
\item{} \textbf{Permutation sensitivity}: the embedding  obtained by these PTMs is sensitive to the permutation of tokens in the input sequence~\cite{tableembed}, however, the joinability of two columns is agnostic to the permutation of cells within each column.
\end{itemize}

\begin{figure}
  \centering
  \includegraphics[width=1\linewidth]{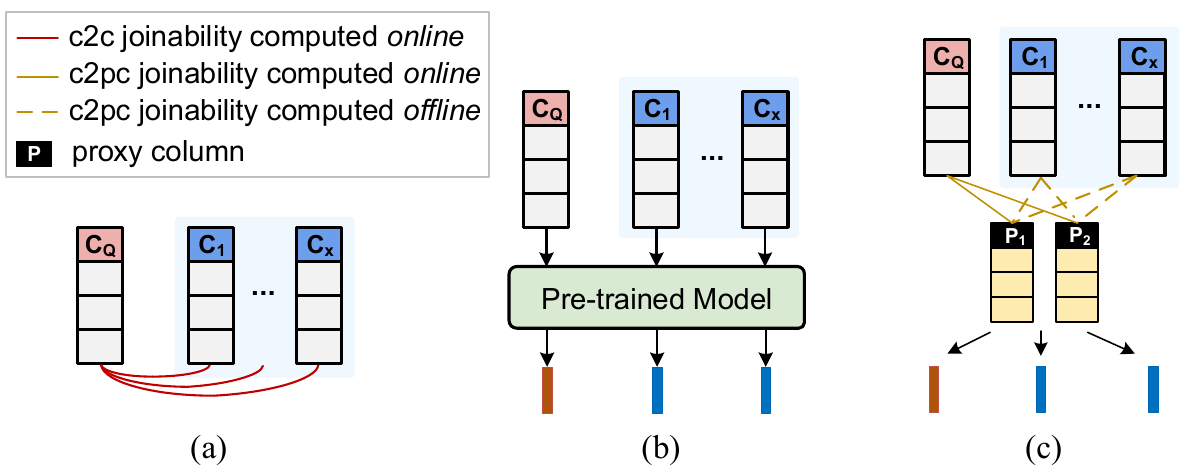} \vspace{-5mm}
   \caption{(a) The column-to-column  joinabilities that exact solutions should assess. (b) The existing column-level solutions encode columns independently without column-to-column interactions. (c)  Our proposed proxy columns capture column-to-proxy-column joinabilities and are used to derive column embeddings. For all repository columns $C_1, \dots, C_x$, c2pc values can be pre-computed offline. During the online stage,  only the computation $C_Q - P_i$ is needed to generate the embedding for the query column $C_Q$.}
   \vspace{-3mm}
   \label{fig:exm2}
\end{figure}

To address these issues, 
we introduce a novel concept termed \textbf{proxy columns}, which serve as representative columns in the column space.
Then, we define a  column projection function to capture the column-to-proxy-column (c2pc) joinabilities. 
The column embedding is derived as the concatenation of the column projection values guided by different proxy columns.
Introducing proxy-column-based column embedding offers several advantages:  (i) we can model c2pc joinabilities (cf. Fig.~\ref{fig:exm2}(c)) to represent the more general c2c  joinabilities, which can better estimate the joinabilities than the existing column-level methods; bridging the semantics-joinability-gap.  (ii) the derived column embeddings can overcome size limit and permutation sensitivity inherent in PTM-based encoders, as long as we guarantee the column projection function is size-unlimited and permutation-invariant; and  (iii) unlike c2c  joinabilities, which need online computations, c2pc joinabilities for all the repository columns can be pre-computed offline, making the online search highly efficient.
Under this main idea, two challenges arise.

\textbf{Challenge I: }\textit{How to design the column projection function to well model the c2pc joinabilities?}
The core of proxy-column-based column representation lies in how to design the column projection function.
A straightforward approach is to model c2pc joinabilities just like the c2c  joinabilities, i.e., defining the joinability score as the column projection value.
However, the joinability scores are typically low due to the strict join definition~\cite{Pexeso,Deepjoin}. Thus, the column embeddings which obtained by concatenating the projection values would be sparse and similar to each other, limiting their ability to capture different inter-column  joinabilities and impeding the effectiveness of subsequent similarity-based search.

\textbf{Challenge II: }\textit{How to obtain good proxy columns for similarity-based join search?} 
Since proxy columns directly affect the final column embeddings, it is crucial to find ``good" proxy columns that can map the joinable columns closely while effectively pushing non-joinable columns far apart in the column embedding space, facilitating the join discovery through similarity search of column embeddings.
In addition, good proxy columns should be aware of the   rank differences among joinable columns. For example, in Fig.~\ref{fig:exm1}, compared to the column $C_2$, good proxy columns should map the column $C_1$ closer to the query column $C_Q$.

To surmount these challenges, we present \textsf{Snoopy}, an effective and efficient column-level \underline{s}ema\underline{n}tic  j\underline{o}in  disc\underline{o}very framework via \underline{p}rox\underline{y} columns. Firstly, we formalize and analyze the desirable properties of column representations for column-level semantic join discovery. Then, we propose to construct a bipartite graph using the specific column and proxy column, and devise an approximate-graph-matching (AGM)-based column projection function to model the c2pc joinabilities. Since it is non-trivial to select a real textual column as a representative proxy column from the huge textual column space, we directly learn its matrix form, i.e., proxy column matrix. 
We regard proxy column matrices as learnable parameters, and present a rank-aware contrastive learning paradigm to learn the proxy column matrices which can derive column representations whose similarities imply joinabilities.  Additionally, we design two different training data generation methods to enable self-supervised training.

Our key contributions can be summarized as follows:
\begin{itemize}  
\vspace{-1mm}
\item{} \textit{Proxy-column-powered framework.} We propose a novel concept termed proxy columns. Based on this, we present a column-level semantic join discovery framework, \textsf{Snoopy}, which bridges effectiveness and efficiency.

\item{} \textit{Lightweight and effective column projection.}
We present a lightweight column projection function based on approximate graph matching, which effectively models the c2pc joinabilities and ensures the size-unlimited and permutation-invariant column representations.


\item \textit{Rank-aware learning paradigm.} We introduce a novel perspective that regards proxy column matrices as learnable parameters, and present a rank-aware contrastive learning paradigm to obtain good proxy columns.
Additionally, we design two training data generation strategies for self-supervised training.

\item{} \textit{Extensive experiments.} Extensive experiments on real-world table repositories demonstrate
that \textsf{Snoopy} achieves both high effectiveness and efficiency.

\end{itemize}

The rest of this paper is organized as follows.
Section~\ref{sec:pre} presents preliminaries. Section~\ref{sec:pivot} introduces proxy columns.  Section~\ref{sec:Snoopy} proposes the framework $\textsf{Snoopy}$. Section~\ref{sec:exp} reports experimental results. 
Section~\ref{sec:relatedwork} reviews the related work. Finally, Section~\ref{sec:conlusion} concludes the paper.

\section{Preliminaries}
\label{sec:pre}

In this section, we first present the the scope of semantic join discovery, and then provide the problem statement.

\subsection{The Scope of Semantic Join Discovery}
\label{subsec: scope}
Our goal is to design a framework that takes a query column as input, and searches for the top-$k$ semantically joinable columns from a large table repository. Then, it is easy to locate the tables with these joinable columns.
How to find the best way of joining column elements after discovery, as explored in studies~\cite{semanticjoin,autofuzzyjoin,Auto-Join} is out of our scope.
Note that, the join relationship is not a simple binary relationship (i.e., joinable or not joinable); rather, it often involves determining which column is more joinable to the query.
Since performing join is time-consuming, users and data scientists prefer to prioritize columns with higher joinability, rather than considering all columns labeled as joinable without any indication of their relative join probabilities.
Thus, it is beneficial for discovery methods to return the top-$k$ columns ranked by joinabilities.

\subsection{Problem Statement}
\label{subsec: prob}

Given a table repository $\mathcal{T}$, each table $T \in \mathcal{T}$ consists of several columns $\{C_1, C_2, \dots, C_{|T|}\}$, where $|T|$ denotes the number of columns in $T$.
In this paper, we focus on  textual columns, following the recent study~\cite{Deepjoin}.
For simplicity, we extract all the textual columns from $\mathcal{T}$ into a column repository $\mathcal{R} =\{C_1, C_2, \dots, C_{|\mathcal{R}|} \}$.  Each column $C_i$ consists of a set of cells $\{c_j^i\}$. Since metadata (e.g., column names) are typically missing or incomplete in real scenarios~\cite{NargesianZMPA19}, we rely solely on cell information.

\begin{myDef}
\label{def: colmat}
    \textnormal{\textbf{(Column Matrix)}.}  Given a column $C= \{c_1, c_2,  \cdots, c_n\}$, and a cell embedding function $h(\cdot)$ which transforms each $c_i \in C$ to an embedding $\mathbf{c}_i = h(c_i) \in \mathbb{R}^d$,  we extend the notation of cell embedding function $h(\cdot)$ to represent the column matrix $\mathbf{C} = h(C) = \{\mathbf{c}_1, \mathbf{c}_2, \cdots, \mathbf{c}_n\}$.
\end{myDef}

\begin{myDef}
\label{def:cellmatch}
    \textnormal{\textbf{(Cell Matching)}.} Given two cells $c_i$ and $c_j$, and the corresponding cell embeddings $\mathbf{c}_i$ and $\mathbf{c}_j$,  we denote $c_i$ matches $c_j$ as  
    $c_i$ $\cong$ $c_j$ if and only if $ d\big(\mathbf{c}_i, \mathbf{c}_j\big) \leq \tau $, where $d(\cdot)$ is a distance function, and $\tau$ is a pre-defined threshold.
\end{myDef}


 



 

\begin{myDef}
\label{def:js}
\textnormal{\textbf{(Semantic Joinability}).} Given a query column $C_Q$ from a table $T_Q$, and a column $C \in \mathcal{R}$, semantic joinability~\cite{Pexeso,Deepjoin} from $C_Q$ to $C$ is defined as follows:
\begin{equation}
\label{eq:js}
    J(C_Q, C)= \left|\left\{c_q \in C_Q \mid  \exists c \in C \text { s.t. } c_q \cong c\right\}\right|/\ |C_Q|  
\end{equation}
\end{myDef}

 
With the definition of semantic joinability measure, we can now formalize the problem we aim to solve.

\begin{Prob}
\label{prob}
    \textnormal{\textbf{(Top-$k$ Semantic Join Discovery)}\footnote{Equi-join discovery can be viewed as a specific case of semantic join discovery by setting the threshold $\tau$ in Definition~\ref{def:cellmatch} to 0.}.}  Given a query column $C_Q$ from a table $T_Q$, and a column repository $\mathcal{R}$, top-$k$  semantic join discovery aims to find a subset $\mathcal{S} \subset \mathcal{R}$ where $|\mathcal{S}|=k$  and $ \forall C \in \mathcal{S}$ and $C' \in \mathcal{R} - \mathcal{S}$, $J(C_Q, C) \geq J(C_Q, C')$.
\end{Prob}

Cell-level methods  (e.g. PEXESO~\cite{Pexeso}) design exact algorithms to Problem~\ref{prob} under the defined semantic joinability measure~\cite{Deepjoin}. These methods perform fine-grained cell-level computations to directly compute $J(C_Q, C_i)$ for the query column and each repository column $C_i \in \mathcal{R}$. Despite optimizations such as pruning and filtering, the worst-case time complexity remains linear with respect to the product of the query column size and the sum of repository column size.
In contrast, column-level methods perform a coarser computation and return approximate results. Specifically, they design a column embedding function $f: \mathcal{R} \rightarrow \mathbb{R}^l$ to obtain the query column embedding $f(C_Q)$ and repository column embeddings $f(C_i)$ for $C_i \in \mathcal{R}$, and perform similarity search to efficiently find the top-$k$ joinable columns $\hat{\mathcal{S}}$ ranked by $\operatorname{sim}\bigl(f(C_Q), f(C_i)\big)$, where $\operatorname{sim}(\cdot)$ is a  column-level similarity measurement. 
Hence, the primary challenge of column-level approaches is \textit{how to devise a column embedding function $f(\cdot)$ that makes the result $\hat{\mathcal{S}}$ as close as possible to $\mathcal{S}$.}



\label{subsec: pivot_tech}

\section{Proxy Columns}
\label{sec:pivot}
As mentioned in Section~\ref{sec:intro}, it is beneficial to model c2c  joinabilities. However, it is time-consuming to perform online computations to assess the relationship of the query column with each column in the repository. To tackle this, we introduce the concepts of proxy column and proxy column matrix.

\begin{myDef}
\textnormal{\textbf{(Proxy Column)}}. Given a column space $\mathbb{C}$ which is a collection of textual columns, a proxy column $P = \{p_1, p_2, \dots, p_m \} \in \mathbb{C}$ is a representative one, based on which, each column in the repository $\mathcal{R}$ can pre-compute the relationships with proxy columns. 
\end{myDef}

\begin{myDef}
\textnormal{\textbf{(Proxy Column Matrix)}}. Given a proxy column $P = \{p_1, p_2, \dots, p_m \}$ and a cell embedding function $h(\cdot)$, the proxy column matrix $\mathbf{P} = h (P) = \{\mathbf{p}_1, \mathbf{p}_2, \dots, \mathbf{p}_m \} \in \mathbb{R}^{m \times d}$, where $\mathbf{p}_i \in \mathbb{R}^d$. 
\end{myDef}

Note that the specific column repository $\mathcal{R}$ is just a subset of the column space $\mathbb{C}$.  Since \textsf{Snoopy} requires the column matrix and proxy column matrix as inputs (detailed later), we define the proxy-guided column projection given a proxy column matrix as follows.


\begin{myDef}
\textnormal{\textbf{(Column Projection)}}.
Given a proxy column matrix $\mathbf{P}$,  the proxy-guided column projection is a function $\pi_{\mathbf{P}} (\cdot)$, which projects a column matrix $\mathbf{C}$ to $\pi_{\mathbf{P}} (\mathbf{C}) \in \mathbb{R}$ indicating the value of a specific dimension in the column embedding space $\mathbb{R}^l$.
\end{myDef}

Based on the column projection, we can obtain the column embeddings. Specifically,   given a set of proxy column matrices  $\mathcal{P} = \{\mathbf{P}_1, \mathbf{P}_2, \dots, \mathbf{P}_l \} \in \mathbb{R}^{l\times m \times d}$, the column embedding of $\mathbf{C}$ is denoted as $
\phi(\mathbf{C})= \left[\pi_{\mathbf{P}_1} (\mathbf{C}), \pi_{\mathbf{P}_2}(\mathbf{C}), \ldots \pi_{\mathbf{P}_l} (\mathbf{C})\right] \in \mathbb{R}^l$.

\section{The \textsf{Snoopy} Framework}
\label{sec:Snoopy}
In this section, we first  
overview the \textsf{Snoopy} framework. Then, we design the column representation via proxy column matrices, and present a rank-aware contrastive learning paradigm to obtain good proxy column matrices. After that, we illustrate the index and online search process. Finally, we devise two training data generation strategies for self-supervised training.


\subsection{Overview}
\label{subsec:overview}

\textsf{Snoopy} framework is composed of two stages: offline and online, as illustrated in Fig.~\ref{fig:framework}.

\noindent\textbf{Offline stage.} Given a table repository $\mathcal{T}$, the textual columns are extracted to form the column repository $\mathcal{R}$. In the \ding{172} training phase, the column representation process transforms each column in $\mathcal{R}$ into a column embedding by concatenating the proxy-guided column projection values. Proxy column matrices are treated as learnable parameters and are updated via rank-aware contrastive learning.
\ding{173} After training, \textsf{Snoopy} uses the learned proxy column matrices to pre-compute all the column embeddings, and stores them in a Vector Storage. The indexes (e.g. HNSW~\cite{HNSW}) are constructed to accelerate the subsequent online search.

\noindent\textbf{Online stage.} Given a  query column $C_Q$ from table $T_Q$, \textsf{Snoopy} first computes the embedding of $C_Q$ using the previously learned proxy column  matrices. 
Then, it uses the embedding of $C_Q$ to search the top-$k$ similar embeddings from the Vector Storage. Finally, \textsf{Snoopy} returns the joinable columns according to the retrieved embeddings.

\begin{figure}
  \centering
  \includegraphics[width=1\linewidth]{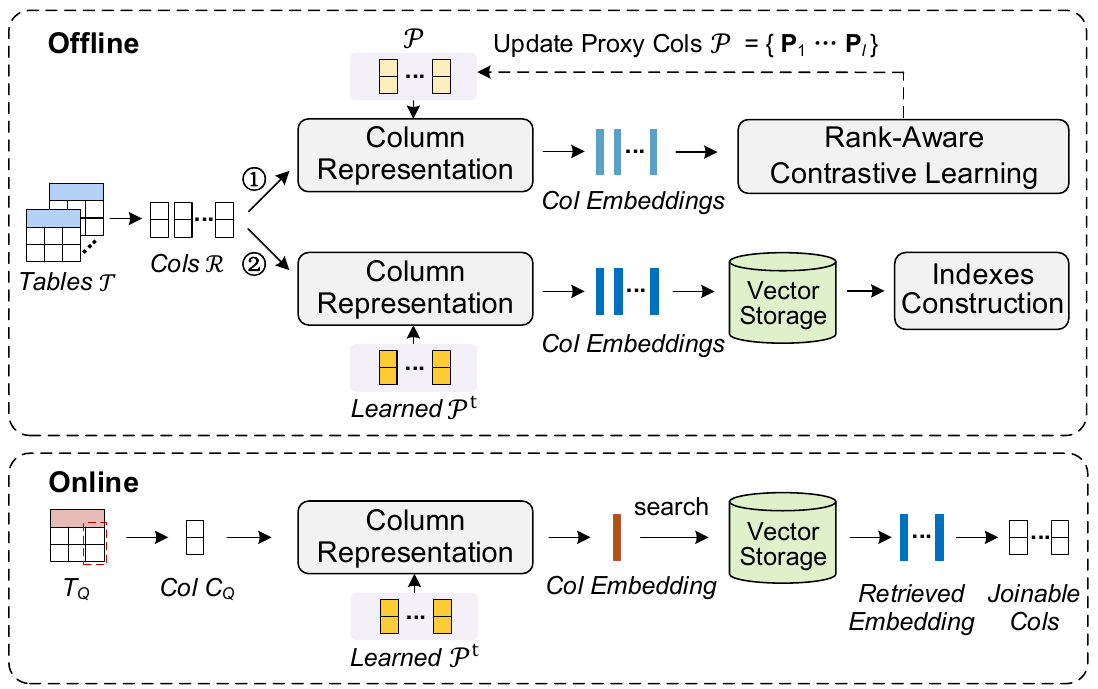} \vspace{-4mm}
  \caption{Overview of \textsf{Snoopy}. ``Column" is abbreviated as ``Col".}
  \label{fig:framework}
  \vspace{-4mm}
\end{figure}

\subsection{Column Representation}
The effectiveness of column-level methods highly depends on column representations. However, existing methods just adopt suboptimal PTMs as column encoders, and the desirable properties of column representations for column-level methods remain under-explored. Thus, we first formalize several desirable properties of column representations, and then propose the AGM-based column projection function to deduce the column embeddings.

\subsubsection{Desirable Properties}
We provide some key observations based on the definition of joinability and formalize the desirable properties of column representations for column-level semantic join discovery.

The first observation is that each cell in the column may contribute to the joinability score. Consider the example in Fig.~\ref{fig:exm1}, the joinability between  $C_Q$ and $C_1$ is $\frac{3}{3}$=1. If we neglect any cell in the  $C_1$, the join score declines. 
This observation implies that the column embedding function should consider all the cells within the column and not be
constrained by the column size, which is formalized as follows.

\begin{myProp}
\label{prop:1}
 \textnormal{\textbf{(Size-unlimited)}.}
     Given a column $C =\{ c_1, c_2,\\ \dots, c_n \} \in \mathcal{R}$, which is the input of the embedding function $f: \mathcal{R} \rightarrow \mathbb{R}^l$, the size $n$ of the input column $C$ is arbitrary.
\end{myProp}

The second observation is that the joinability between two columns is agnostic to the permutations of cells within each column. For instance, if we permute the column  $C_Q$  in Fig.~\ref{fig:exm1} from \{``Los Angeles", ``New York", ``Washington"\} to \{``Washington", ``Los Angeles", ``New York"\}, the joinability between $C_Q$  and $C_1$ is still $\frac{3}{3}$=1. This suggests that the column embedding should be permutation-invariant, which is formalized as follows.
\begin{myProp}
\label{prop:2}
\textnormal{\textbf{(Permutation-invariant)}.} 
Given a column $C = \{c_1, c_2, \dots, c_n \}$, and an arbitrary bijective function $\delta: \{1, 2, \dots, n\} \rightarrow \{1, 2, \dots, n\}$, a permutation of column $C$ is denoted as $\Tilde{C} = \{c_{\delta(1)}, c_{\delta(2)}, \dots, c_{\delta(n)} \}$. The column representation is expected to satisfy:
\begin{equation}
    f(C) \equiv f(\Tilde{C}) 
\end{equation}
\end{myProp}



Recall that the column-level methods return the top-$k$ results through the similarity comparison of column embeddings. 
Hence, an ideal column embedding function $f(\cdot)$ needs to preserve the ranking order of columns as determined by their joinabilities in Definition~\ref{def:js}.

\begin{myProp}
\label{prop:3}
\textnormal{\textbf{(Order-preserving)}.} 
Given a query column $C_Q$, two candidate columns $C_1$ and $C_2$, the joinability $J(\cdot)$ as defined in Definition~\ref{def:js}, and the column-level similarity measurement $\operatorname{sim}(\cdot)$, the ideal column representation satisfies:
\begin{equation}
\begin{gathered}
\operatorname{sim}\left(f\left(C_Q\right), f\left(C_1\right)\right) \geq \operatorname{sim}\left(f\left(C_Q\right), f\left(C_2\right)\right) \\
\Leftrightarrow J\left(C_Q, C_1\right) \geq J\left(C_Q, C_2\right)
\end{gathered}
\end{equation}

\end{myProp}

The property~\ref{prop:3} is the ultimate goal ensuring that column-level methods achieve the same effectiveness as exact solutions.
However, it is challenging to achieve this due to the inevitable information loss resulting from coarse computations at the column level.  
Thus, we design a training paradigm to learn good proxy column matrices capable of deriving column representations that approximate this goal (Section~\ref{subsubsec:PivotLearn}). Note that the defined joinability in Definition~\ref{def:js} relies on a cell embedding function $h(\cdot)$. If $h(\cdot)$ is inferior, then the ground truth in evaluation and the learned  column embedding function $f(\cdot)$ may be affected.
In this paper, we follow~\cite{Pexeso} to employ fastText~\cite{fasttext} as a cell embedding function, and show how the ground truth shifts when adopting different cell embedding functions (see Section~\ref{subsec:further_exp}).

\subsubsection{AGM-based Column Projection}
\label{subsec:ColRep}
In order to consider all the cells within the column $C$, 
\textsf{Snoopy} first transforms it into the column matrix $\mathbf{C} = h(C)$ by the cell embedding function $h(\cdot)$. 
Then, it computes the column
representation using $\mathbf{C}$ as the input. 
The cell embedding function $h(\cdot)$ is utilized to capture the cell semantics, so that the cells that are not exactly the same but semantically equivalence can be matched and contribute to the joinability.  
Several alternatives are available for the cell embedding function, including pre-trained language models~\cite{fasttext, sentencebert} and models tailored for entity matching~\cite{camper, ditto}.
Note that, designing a good cell embedding function is orthogonal to this work.


As mentioned in Section~\ref{sec:intro}, the core of proxy-column-based column representation lies in the design of column projection function  $\pi_{\mathbf{P}}(\cdot)$  that aims to well capture the c2pc relationships. Recall that, the semantic joinability in Equation (\ref{eq:js}) represents the c2c relationship captured by the cell-level methods, and it is naturally \textit{size-unlimited} and \textit{permutation-invariant} by definition. 
Thus, a straightforward way is to define $\pi_{\mathbf{P}}(\mathbf{C}) = J(\mathbf{C}, \mathbf{P})$. However, it results in significant information loss.  We begin our analysis by rewriting the joinability in Equation~(\ref{eq:js}) into the following equivalent  mathematical form:
\begin{equation}
\label{eq:js2}
    J(\mathbf{C}, \mathbf{P})= \sum_{i=1}^n \mathds{1}\left(\min _{\mathbf{p}_j \in \mathbf{P}}  d\left(\mathbf{c}_i, \mathbf{p}_j\right) \leq \tau\right) /|\mathbf{C}|
\end{equation}
\noindent
where $\mathds{1}(\mathbf{x})$ is a binary indicator function that returns 1 if the  predicate $\mathbf{x}$ is true  otherwise returns 0. Since the threshold $\tau$ is typically small~\cite{Pexeso} (otherwise, it would introduce many falsely matched cells), the predicate  $\left(\min _{\mathbf{p}_j \in \mathbf{P}}  d\left(\mathbf{c}_i, \mathbf{p}_j\right)\leq\tau\right)$  is typically NOT true.  Consequently, $\mathds{1}(\cdot)$ frequently returns a value of 0, resulting in $J(\mathbf{C}, \mathbf{P})$ being close to 0 for numerous different $\mathbf{C}$. Hence, the projection values tend to be small and similar to each other, resulting in a significant information loss.
\begin{example}
\label{example-4}
Consider the example in Fig.~\ref{fig:exm1}. According to the cells within the columns,  the column matrix $\mathbf{C}_Q$ is supposed to be similar to $\mathbf{C}_1$ while dissimilar to $\mathbf{C}_3$. Thus, we assume that  $\mathbf{C}_Q = [[-0.1, 0.2], [0.2, 0.3], [-0.2,$ $ 0.3]], \mathbf{C}_1 = [[-0.1, 0.25], [0.15, 0.35], $ $ [-0.15, 0.3]]$, and $\mathbf{C}_3 = [[-0.8, 0.2],$ $[0.3, 0.8], [0.5, 0.3]]$. We also assume that a proxy column matrix $\mathbf{P} = [[1, -1], [-1, -1]]$ is used, the Euclidean distance is adopted as $d(\cdot)$ and the threshold $\tau = 0.1$. If we set $\pi_{\mathbf{P}}(\mathbf{C}) = J(\mathbf{C}, \mathbf{P})$, we would get $\pi_{\mathbf{P}}(\mathbf{C}_Q) = \pi_{\mathbf{P}}(\mathbf{C}_1) = \pi_{\mathbf{P}}(\mathbf{C}_3) = 0$, despite the differences of the three columns.
\end{example}

This observation motivates us to explore a column projection mechanism to better capture the c2pc joinabilities. 
Since both the column $\mathbf{C}$ and proxy column $\mathbf{P}$ can be treated as order-insensitive cell sets, we resort to the maximum bipartite matching which has been extensively employed in set similarity measurement~\cite{SilkMoth,TokenJoin}. Given an input column $\mathbf{C} = \{\mathbf{c}_1, \mathbf{c}_2, \dots, \mathbf{c}_n\}$, a proxy column $\mathbf{P} = \{\mathbf{p}_1, \mathbf{p}_2, \dots, \mathbf{p}_m\}$, and a similarity measurement $\operatorname{sim}(\cdot)$, we construct a bipartite graph $\mathcal{G} = (\mathbf{C}, \mathbf{P}, E)$, where $\mathbf{C}$ and $\mathbf{P}$ are two disjoint vertex sets, $E = \{e_{ij}\}$ is an edge set where $e_{ij}$ connects vertices $\mathbf{c}_i$ and $\mathbf{p}_j$ and associated with a weight $\operatorname{sim}(\mathbf{c}_i, \mathbf{p}_j)$.
\begin{myDef}
    \textnormal{\textbf{(Maximum Weighted Bipartite Matching)}}. Given a bipartite graph $\mathcal{G} = (\mathbf{C}, \mathbf{P}, E)$, the maximum weighted bipartite matching aims to
    find a subset of edges   $M \subset E$ that maximizes the sum of edge weights while ensuring no two edges in $M$ share a common vertex. The matching value $\mathcal{M}(\mathbf{C}, \mathbf{P})$ is formulated as follows:
\begin{equation}
\label{eq:BM}
\begin{aligned}
&  {\mathcal{M}}(\mathbf{C}, \mathbf{P})=\max \sum_{i=1}^n \sum_{j=1}^m u_{i j} \operatorname{sim}\left(\mathbf{c}_{i},\mathbf{p}_{j}\right),  \text{subject to:} \\
& \textnormal{(i)} \; \forall  i \in\{1,2, \ldots, n\}, \sum_{j = 1}^m  u_{i j} \leq 1,  u_{i j} \in\{0,1\}\\
& \textnormal{(ii)} \; \forall j \in\{1,2, \ldots, m\}, \sum_{i = 1}^n  u_{i j} \leq 1,  u_{i j} \in\{0,1\}
\end{aligned}
\end{equation}
\end{myDef}
\noindent
where $u_{ij} = 1$ (or $0$) indicates the edge $e_{i j}$ is (or not) in $M$.

However, since the time complexity of weighted bipartite matching is 
$\mathcal{O}(\{\max(m,n)\}^3)$ using the classical Hungarian algorithm, it is rather inefficient to perform the computation.
To tackle this,
we remove the second constraint (ii) in Equation~(\ref{eq:BM}) to formulate an approximate graph matching (AGM) problem. In this way, two edges are allowed to have common vertices in set $\mathbf{P}$. Thus, the AGM can be solved by a lightweight greedy mechanism and we define the projection $\pi_{\mathbf{P}}(\mathbf{C})$ as the result  $\mathcal{M}'(\mathbf{C}, \mathbf{P})$ of AGM as follows:
\begin{equation}
\label{eq:ABM}
     \pi_{\mathbf{P}}(\mathbf{C}) = \mathcal{M}'(\mathbf{C}, \mathbf{P})=\sum_{i=1}^n  \max _{\mathbf{p}_j\in \mathbf{P}} \operatorname{sim}\left(\mathbf{c}_{i},\mathbf{p}_{j}\right)
\end{equation}
\noindent This approximation reduces the time complexity to $\mathcal{O}(mn)$, and the operations can be executed on a GPU, further enhancing efficiency. We use dot product as the measurement $\operatorname{sim}(\cdot)$.

Finally, given a proxy column set $\mathcal{P} = \{\mathbf{P}_1, \mathbf{P}_2, \dots, \mathbf{P}_l \}$, we can obtain the column embedding of column $C$:
\begin{equation}
\label{eq:pi}
    f(C)  =  \phi(\mathbf{C}) = \left[\pi_{\mathbf{P}_1} (\mathbf{C}), \pi_{\mathbf{P}_2}(\mathbf{C}), \ldots \pi_{\mathbf{P}_l} (\mathbf{C})\right] 
\end{equation}

\noindent where $\pi_{\mathbf{P}_i}(\cdot)$ is desinged as Equation (\ref{eq:ABM}).


\begin{myThm}
    The proposed column representation $f(C)$ is size-unlimited  and permutation-invariant.
\end{myThm}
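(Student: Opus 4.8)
The plan is to prove the two claimed properties separately, observing that each reduces to an elementary structural fact about the AGM-based projection in Equation~(\ref{eq:ABM}). Since $f(C) = \phi(\mathbf{C})$ is merely the concatenation of the $l$ scalar projections $\pi_{\mathbf{P}_1}(\mathbf{C}), \dots, \pi_{\mathbf{P}_l}(\mathbf{C})$, it suffices to establish each property for a single coordinate $\pi_{\mathbf{P}}(\mathbf{C})$ under a fixed but arbitrary proxy column matrix $\mathbf{P}$, and then lift the conclusion to the full vector coordinatewise. Both properties hold essentially by construction, because the projection was deliberately defined as a \emph{sum of per-cell maxima} rather than a one-to-one bipartite assignment.

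For Property~\ref{prop:1} (size-unlimited), I would fix any column size $n \geq 1$ and inspect $\pi_{\mathbf{P}}(\mathbf{C}) = \sum_{i=1}^n \max_{\mathbf{p}_j \in \mathbf{P}} \operatorname{sim}(\mathbf{c}_i, \mathbf{p}_j)$. This is a finite sum of $n$ real-valued terms, each of which is a maximum of $\operatorname{sim}(\cdot)$ over the fixed finite set $\mathbf{P}$; hence it is a well-defined scalar for every $n$. Crucially, the output dimension of $f(C)$ equals $l$, the number of proxy columns, which is independent of $n$. Therefore $f$ accepts a column of arbitrary size and always returns an element of $\mathbb{R}^l$, which is exactly the statement of Property~\ref{prop:1}.

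For Property~\ref{prop:2} (permutation-invariant), let $\delta$ be an arbitrary bijection on $\{1, \dots, n\}$ and $\tilde{C}$ the induced permutation. The key observation is that each summand $\max_{\mathbf{p}_j \in \mathbf{P}} \operatorname{sim}(\mathbf{c}_i, \mathbf{p}_j)$ depends only on the cell embedding $\mathbf{c}_i$ and not on its position $i$, so permuting the cells merely reorders the summands. Writing $\pi_{\mathbf{P}}(\tilde{\mathbf{C}}) = \sum_{i=1}^n \max_{\mathbf{p}_j \in \mathbf{P}} \operatorname{sim}(\mathbf{c}_{\delta(i)}, \mathbf{p}_j)$ and reindexing by $i' = \delta(i)$ — legitimate since $\delta$ is a bijection — together with commutativity of finite addition yields $\pi_{\mathbf{P}}(\tilde{\mathbf{C}}) = \pi_{\mathbf{P}}(\mathbf{C})$. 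Applying this identity to each of the $l$ coordinates gives $f(\tilde{C}) = f(C)$, establishing Property~\ref{prop:2}. I do not expect any genuine obstacle here: the only subtlety worth flagging is that the decoupling of the summands is precisely what the removal of constraint (ii) in Equation~(\ref{eq:BM}) buys us — the exact maximum weighted matching $\mathcal{M}(\mathbf{C}, \mathbf{P})$ would also be permutation-invariant, but the greedy approximation $\mathcal{M}'(\mathbf{C}, \mathbf{P})$ makes both properties transparent because each cell contributes independently through its own maximum.
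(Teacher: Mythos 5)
Your proof is correct and follows essentially the same route as the paper's: both arguments observe that each coordinate $\pi_{\mathbf{P}}(\mathbf{C})$ is a sum over cells of a per-cell quantity that is well-defined for any $n$ and independent of cell position, so permutation merely reorders summands. Your version is simply a more explicit rendering (with the reindexing by $\delta$ spelled out) of the paper's informal argument, with the one negligible difference that the paper also explicitly notes that $h(\cdot)$ itself processes cells independently.
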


\begin{proof}
   The  transformation $h(\cdot$) from the input column $C$ to column matrix $\mathbf{C}$ is size-unlimited, as all cells are processed independently. The function $\pi_{\mathbf{P}_i} (\cdot)$ in Equation (\ref{eq:ABM}) is also size-unlimited, as it will go through each cell embedding $\mathbf{c}_i$. Thus, $f(C)$ is size-unlimited.
    Permuting the order of cells in $C$ changes the order of $\{\mathbf{c}_i\}$ in column matrix $\mathbf{C}$. However, $\pi_{\mathbf{P}_i} (\cdot)$ is independent of this order,  making $\phi(\mathbf{C})$ permutation-invariant. Thus, $f(C)$ is permutation-invariant.
\end{proof}

\noindent \textbf{Discussion.}
We now analyze the correlation between the column projection in Equation (\ref{eq:ABM}) and the joinability definition in Equation (\ref{eq:js2}).
First, we substitute the distance function $d(\cdot)$ in Equation (\ref{eq:js2}) with a similarity function $\operatorname{sim}(\cdot)$ and correspondingly replace the distance threshold $\tau$ with a similarity threshold $\alpha$. Since the smaller distance means the higher similarity, we can rewrite the Equation (\ref{eq:js2}) as follows:
\begin{equation}
\label{eq:js3}
    J(\mathbf{C}, \mathbf{P})= \sum_{i=1}^n \mathds{1}\left(\max _{\mathbf{p}_j \in \mathbf{P}}  \operatorname{sim}\left(\mathbf{c}_i, \mathbf{p}_j\right)>\alpha\right) /|\mathbf{C}|
\end{equation}

It is observed that we can simplify 
the term $\sum_{i=1}^n \mathds{1}\left(\max _{\mathbf{p}_j \in \mathbf{P}}  \operatorname{sim}\left(\mathbf{c}_i, \mathbf{p}_j\right)>\alpha\right)$ in Equation (\ref{eq:js3}) by eliminating the inequality comparison condition $> \alpha$ and the step-like indicator function $\mathds{1}(\cdot)$, and obtain the Equation (\ref{eq:ABM}).
This indicates that our designed column projection is a
smoothed version of the primary component of Equation (\ref{eq:js3}).
It mitigates information loss from the step-like indicator function, preserving more c2pc relationship details.

\begin{example}
\label{example-6}
    Continuing with Example~\ref{example-4}, applying the AGM-based column projection yields  $\pi_{\mathbf{P}}(\mathbf{C}_1) = -0.3$, $\pi_{\mathbf{P}}(\mathbf{C}_2) = -0.55$, and $\pi_{\mathbf{P}}(\mathbf{C}_3) = 0.3$. Thus, the phenomenon of information loss is alleviated.
\end{example}

\subsection{Rank-Aware Contrastive Learning}
\label{subsubsec:PivotLearn}
To achieve Property~\ref{prop:3}, it is crucial to identify good proxy column matrices $\mathcal{P} \in \mathbb{R}^{l \times m \times d }$ that can map the joinable columns closer and push the non-joinable columns far apart in the column embedding space. Traditional pivot selection methods in metric spaces~\cite{ZhuCGJ22} can be extended to select proxy columns from the given column repository $\mathcal{R}$, and then derive the corresponding proxy column matrices. However, these methods are constrained in the subspace $\mathcal{R}$ of the column space  $\mathbb{C}$, and are not designed for order-preserving property, resulting in inferior accuracy (see ablation study in Section~\ref{sec:exp_ablation}). Also, it is non-trivial to directly select good proxy columns that can achieve order-preserving, as columns consist of discrete cell values. To this end, we present a novel perspective that regards the continuous proxy column matrices $\mathcal{P} \in \mathbb{R}^{l \times m \times d }$ as learnable parameters, and introduce the rank-aware contrastive learning paradigm 
to learn $\mathcal{P}$
guided by the order-preserving objective.


Order-preserving entails high similarity for joinable column embeddings and low similarity for non-joinable ones. We introduce the contrastive learning paradigm to achieve this objective. Specifically, given an anchor column $C$, a positive column $ C^{+}$ with high joinability $J(C, C^+)$,  a negative column set $\{C_i^-\}_{i=1}^u$ with low joinabilities $\{J(C, C_i^-)\}_{i=1}^u$, and the column embedding function $f(\cdot)$, we have InfoNCE loss~\cite{Moco} as follows:
\begin{equation}
\label{eq:cl}
\mathcal{L} =-\log \frac{e^{  f(C)^{\top} f(C^+)/ t}}{ e^{ f(C)^{\top} f({C}^+)/ t} + {\sum_i e^{  f({C})^{\top} f({C_i}^-)/ t}}}
\end{equation}
\noindent
where $t$ is a  temperature scaling parameter and we fix it to be 0.08 empirically. Note $ f(C)^{\top} f(C_i)$ denotes the  cosine similarity, as we normalize the column embedding to ensure that $\| f(\cdot) \| = 1$.  

\noindent \textbf{Rank-Aware Optimization.}
In traditional InfoNCE loss, each anchor column has just one positive column, and thus, it is difficult to distinguish the ranks of different positive columns~\cite{rank}. To incorporate the rank awareness of positive
columns, we introduce a positive ranking list $L_C = [C_1^+, C_2^+, \dots, C_s^+]$ for each anchor $C$. The positive columns within $L_C$ are sorted in descending order by joinabilities, i.e., $J(C, C_1^+) > J(C, C_2^+) > \dots > J(C, C_s^+)$. During training, each column $C_j^+ \in \mathcal{R}$ is regarded as a positive column or pseudo-negative column. Specifically, we recursively regard $C_j^+$ as a positive column and $\{C_{j+1}^+, C_{j+2}^+, \dots, C_s^+\}$ as pseudo-negative columns.
The pseudo-negative columns, along with the true negative columns $\{C_i^-\}_{i=1}^u$, are combined to form the negative columns for training, as shown in Fig.~\ref{fig:rank}.
Note that, the negative column set $\{C_i^-\}_{i=1}^u$ is generated from the negative queue $\mathcal{Q}$, which will be detailed  in Algorithm~\ref{algo:algo1} later.
As for the positive ranking list, we design two kinds of data generation strategies to automatically construct it (see Section~\ref{subsec:data_gen}).
We adopt the rank-aware contrastive loss~\cite{rank} $\mathcal{L}_{rank} = \sum_{j=1} ^ s \ell_j$, where $\ell_j$ is defined as follows:
\begin{equation}
\label{eq:rcl}
\ell_j =- \log \frac{e^{  f(C)^{\top}f(C_j^+)/ t}}{{ Z+\sum \limits_{r \geq (j+1)} e^{  f({C})^{\top} f({C_r}^+)/ t}}  }
\end{equation}

\noindent where $Z = { e^{ f(C)^{\top} f({C}{_j}^+)/ t} + {\sum_i e^{  f({C})^{\top} f({C_i}^-)/ t}}}$ in Equation~(\ref{eq:cl}). Equation~(\ref{eq:rcl}) takes the ranks into consideration. Specifically, minimizing $\ell_j$ requires
decreasing $f(C)^{\top} f(C_{j+1}^+)$; while minimizing $\ell_{j+1}$ requires increasing $f(C)^{\top} f(C_{j+1}^+)$. Thus, $\ell_j$ and $\ell_{j+1}$ will compete for the  $f(C)^{\top} f(C_{j+1}^+)$. Since $C_{j+1}^+$ would be regarded as negative columns $j$ times while $C_j^+$ would be regarded as negative columns only $(j-1)$ times,  it would guide a ranking of positive columns with $f(C)^{\top} f(C_j^+) > f(C)^{\top} f(C_{j+1}^+)$.

\begin{figure}
  \centering
  \includegraphics[width=1\linewidth]{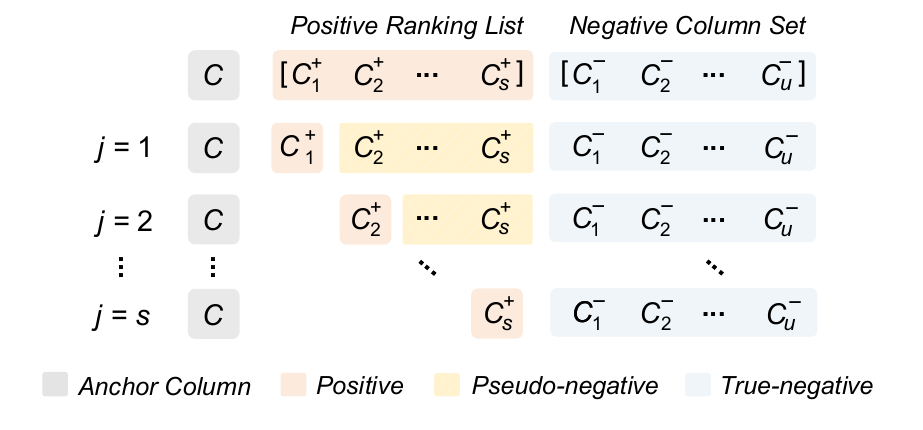} \vspace{-7mm}
  \caption{An illustration of rank-aware contrastive learning.}
  \label{fig:rank}
  \vspace{-1mm}
\end{figure}

\begin{algorithm}[!tb]

	\small
	\LinesNumbered
	\caption{\protect\mbox {\textsf{Rank-aware Contrastive Learning (RCL)}}}
 \label{algo:algo1}
        \KwIn{a column repository $\mathcal{R}$, the training epoch $\mathcal{E}$, the list $L$ of positive ranking lists for each column,  the length $s$ of each ranking list,
        the length $\beta$ of the queue $\mathcal{Q}$}
	\KwOut{the learned proxy column matrices  $\mathcal{P}^\text{t}$}
	Initialize target proxy column matrices  $\mathcal{P}^\text{t}$ and momentum proxy column matrices $\mathcal{P}^\text{m}$ in the same way\\
	\ForEach{ e $\in \left \{0, 1, ..., \mathcal{E}\right \}$}
    {     Get a $batch$ of columns from $\mathcal{R}$\\
	   $\mathcal{Q} \leftarrow$ Enqueue($\mathcal{Q}$, $batch$) \\
           \If{$len(\mathcal{Q}) >= \beta+1$}
           {
              CurrentBatch $B \leftarrow$ Dequeue($\mathcal{Q}$)\\
              \ForEach{ column $C$ in $B$  }
              {     Get the positive ranking list $L_C$ of $C$ from $L$\\
                   \ForEach{ j $\in \left \{0, 1, ..., s\right \}$  }
                   { 
                     $C_j^+ \leftarrow L_C\left[j\right]$\\
                     $\{C_r^+\} \leftarrow L_C\left[j+1:\right]$\\
                     $\{C_i^-\} \leftarrow$ All the columns in the current $\mathcal{Q}$\\
                     Compute $\ell_j$ using $C$, $C_j^+$,$\{C_r^+\}$, and $\{C_i^-\}$ \text{// Equation (\ref{eq:rcl}) }\\
                   }
                   Compute $\mathcal{L}_C$ using $\{\ell_j\}$
              }
              Compute $\mathcal{L}_B$ using $\{\mathcal{L}_C\}$\\
              Compute the gradients $\nabla_{{\mathcal{P}}^\text{t}}\mathcal{L}_B$\\
              Update $\mathcal{P}^\text{t}$ using stochastic gradient descent\\
              Update $\mathcal{P}^\text{m}$ with momentum\text{// Equation (~\ref{eq:mom}) }\\
              
           }
	
	}
	\Return{$\mathcal{P}^\text{t}$}
  
\end{algorithm}

In practice, considering the training efficiency, we freeze the cell embedding function $h(\cdot)$, and thus, the only learnable parameters are the proxy column  matrices $\mathcal{P} = \{\mathbf{P}_1, \mathbf{P}_2, \dots, \mathbf{P}_l \}$.
Since enlarging the number of negative samples typically brings performance improvement in contrastive learning~\cite{camper}, we maintain a  negative column queue $\mathcal{Q}$ with the pre-defined length $\beta$ to consider more negative columns.
We present the rank-aware contrastive learning process (\textsf{RCL}) in Algorithm~\ref{algo:algo1}.  At the beginning, \textsf{RCL} would not implement the gradient update until the  queue reaches the pre-defined length $\beta$ + 1 (line 5). Next, the oldest batch is dequeued and becomes the current batch (line 6).
For each column $C$ in the current batch, \textsf{RCL} gets the positive ranking list $L_C$ (line 8), and there are $s$ iterations to be performed to compute the loss $\mathcal{L}_C$. 
In each iteration, \textsf{RCL} first gets the positive column $C_j^+$, the pseudo-negative columns $\{C_r^+\}$, and  the true negative columns $\{C_i^-\}$ (line 10--12).
Then, \textsf{RCL} computes the loss $\ell_j$ using Equation (\ref{eq:rcl}). After $s$ iterations,  the loss $\mathcal{L}_C$ can be obtained (line 14). 
After getting losses of each column in the current batch, \textsf{RCL} computes $\mathcal{L}_B$ and the gradients of proxy column  matrices (line 15--16).
We adapt the momentum technique~\cite{Moco} to mitigate the obsolete column embeddings. Specifically, two sets of proxy column  matrices (i.e., the target proxy column  matrices ${\mathcal{P}}^\text{t}$ and the momentum proxy column matrices ${\mathcal{P}^\text{m}}$) are maintained. While ${\mathcal{P}}^\text{t}$ is instantly updated with the backpropagation, ${\mathcal{P}}^\text{m}$ is updated with momentum as follows:
\begin{equation}
\label{eq:mom}
    \mathbf{\mathcal{P}}^\text{m} \leftarrow \alpha \cdot \mathbf{\mathcal{P}}^\text{m}+(1-\alpha) \cdot \mathbf{\mathcal{P}}^\text{t}, \alpha \in[0,1)
\end{equation}

\noindent where  $\alpha$ is the momentum coefficient. Note that ${\mathcal{P}}^\text{t}$ and ${\mathcal{P}}^\text{m}$ are initialized in the same way before training (line 1).
The learned $\mathcal{P}^\text{t}$ is used to compute the column embeddings during offline and online processes.

\subsection{Index and Search}
\label{subsec:OnlineSearch}
After contrastive learning, \textsf{Snoopy} uses the learned $\mathcal{P}^\text{t}$ to pre-compute the embeddings of all the columns in the repository $\mathcal{R}$.
Then, \textsf{Snoopy} can construct the indexes for column embeddings using any prevalent indexing techniques, to boost the online approximate nearest neighbor (ANN) search. Since graph-based methods are a proven superior trade-off in terms of accuracy versus efficiency~\cite{WangXY021}, \textsf{Snoopy} adopts HNSW~\cite{HNSW} to construct indexes.

For online processing, when a query column $C_Q$ comes, \textsf{Snoopy} uses the learned $\mathcal{P}^\text{t}$ to get $f(C_Q)$. Compared with the existing PTM-based methods, \textsf{Snoopy}'s online encoding process is more efficient due to the proposed lightweight AGM-based column projection. Then, \textsf{Snoopy}  performs the ANN search using the indexes constructed offline and finds the top-$k$ similar column embeddings to $f(C_Q)$ from the Vector Storage. Finally, the corresponding top-$k$ joinable columns in the table repository are returned. We analyze the time complexity of  the online and offline stages in Appendix A.


\subsection{Training Data Generation}
\label{subsec:data_gen}
During contrastive learning, the negative column set is generated from the negative queue $\mathcal{Q}$, as most column pairs in the huge repository have low joinabilities.
As for the positive columns,
we design two kinds of data generation strategies.
Our objective is to synthesize positive columns with expected joinability scores. In this way, we can generate a ranked 
positive list $L_C$ according to given ranked  scores.

 
  
	
 


\vspace{1mm}
\noindent \textbf{Text-level Synthesis}.
Given a column $C$, we randomly divide it into two sub-columns $C_a$ and $C_b$. We use $C_a$ as the anchor column, and $C_b$ as the residual column. 
The reason why we maintain the residual column $C_b$ will be detailed later.
Now we illustrate how to synthesize a positive column $C_a^+$ for $C_a$ with a joinability approximated to a specified score $x\%$.

We randomly sample $x\%$ cells from the anchor $C_a$ to form a sampling column $S_a$. However, directly using $S_a$ as the positive column has two shortcomings: (i) the matched cells between $C_a$ and $S_a$ are exactly the same, without covering the semantically-equivalent cases, and (ii) each cell in $S_a$ can find matched cell in the anchor $C_a$, which is not realistic.
To tackle (i), we follow~\cite{Watchog,starmine} to apply straightforward augmentation operations to each cell $c \in S_a$ in text-level.  Note we should guarantee that the augmentation operator would not be too aggressive that $c$ and the augmented $c'$ are no longer matched under the distance threshold $\tau$ in Definition~\ref{def:cellmatch}. If that happens, we give up this augmentation and let $c'= c$. After that, we obtain an augmentation version $S'_a$ of $S_a$.
To tackle (ii), we concatenate $S'_a$ with the residual sub-column $C_b$ to obtain $S'_a || C_b$. Since we can assume that duplicates are few in the original column~\cite{autofuzzyjoin},  most cells in the sub-column $C_b$ do not have matches in the sub-column $C_a$, effectively simulating non-matched cells between the positive column and the anchor column.
Finally, we apply shuffling to $S'_a || C_b$ to obtain the required positive column $C_a^+$.

\vspace{1mm}
\noindent\textbf{Embedding-level Synthesis.}
An obstacle to text-level synthesis is that it is hard to determine the granularity of the applied augmentation operator, especially in an extremely small threshold $\tau$. To this end, we propose another embedding-level column synthesis strategy.

Given a column matrix $\mathbf{C}$, analogous to the text-level method, we first horizontally divide $\mathbf{C}$ into two sub-matrices $\mathbf{C}_a$ and $\mathbf{C}_b$. Then we randomly samples $x\%$ rows from $\mathbf{C}_a$, and denote it as $\mathbf{S}_a$. Note that each row of $\mathbf{S}_a$ is a cell embedding $\mathbf{c}$. Next, we  augment each cell embedding vector $\mathbf{c}$ by random perturbation. Specifically, we generate a random vector $\mathbf{r} $ following normal distribution $\mathcal{N}(0, \sigma)$, and gets the augmented $\mathbf{c}' = \mathbf{c} + \mathbf{r}$. Here, we also need a validation step to ensure that $d(\mathbf{c}, \mathbf{c}') \leq \tau$. But the advantage is that we can reduce $\sigma$ by multiplying a coefficient $\gamma \in (0,1)$ until the augmentation is proper to make $\mathbf{c}'$ matches $\mathbf{c}$ under the threshold $\tau$. Since the augmentation is operated in the continual embedding space, it is more flexible than the text-level method.

\section{Experiments}
\label{sec:exp}

In this section, we conduct extensive experiments to demonstrate the effectiveness and efficiency of our proposed \textsf{Snoopy}.

\subsection{Experiment setup}
\subsubsection{Datasets}

We use three real-world table repositories to evaluate the effectiveness of \textsf{Snoopy} and baseline methods. 
Wikitable is a dataset of relational tables from Wikipedia~\cite{Wikidataset}. Opendata is a data lake table repository from Canadian and UK open datasets~\cite{LSH,santos}. WDC Small is a sample dataset with long tables from the WDC Web Table Corpus\footnote{http://webdatacommons.org/webtables/2015/downloadInstructions.html}. For each dataset $\mathcal{T}$, we remove whitespace and only selected columns with a length greater than 5 and not numeric to form column repository $\mathcal{R}$. Since not all datasets contain metadata, for fairness, we only use the cells within each column. To generate queries and avoid data leaks, we randomly sample 50 columns for WikiTable and WDC Small, and 100 columns for Opendata from the original corpus except those in $\mathcal{T}$, following previous studies~\cite{Deepjoin, starmine}. 
For efficiency evaluation, we use a large dataset, WDC Large, which consists of 1 million columns extracted from 186,744 tables in the WDC Web Table Corpus.

\begin{table}[t] \small
\centering
\caption{Statistics of datasets. ``size" denotes \# of cells per column.}
\renewcommand{\arraystretch}{1.15} 
\vspace{-2mm}
\label{tab:dataset}
\setlength{\tabcolsep}{0.8mm}{
\begin{tabular}{l|ccccl} 
\specialrule{.12em}{.06em}{.06em}
Dataset   & $|\mathcal{T}|$  & $|\mathcal{R}|$ & Min. size & Max. size & Avg. size  \\ 
\hline
WikiTable &  32,614       &   228,299     &   5       &    999      &   29.41        \\
Opendata  &  2,310       &   13,918     &   5       &    1,417     &   151.74        \\
WDC Small &   17,763      &   97,703     &    27      &     810     &    102.44      \\ 
\hline
WDC Large &    186,744     &   1,000,000      &     1     &     1,974     &      115.90     \\

\specialrule{.12em}{.06em}{.06em}
\end{tabular}}
\vspace{-3mm}
\end{table}

\begin{table*}[t]
\footnotesize
\centering
\caption{Performance comparison of different methods on three datasets. The best results are in bold and the second best underlined. R@k and N@k refer to Recall@k and NDCG@k, respectively.}
\vspace{-2mm}
\label{tab:effectiveness}
\renewcommand{\arraystretch}{1.3} 
\setlength{\tabcolsep}{0.75mm}{
\begin{tabular}{c|ccc|ccc|ccc|ccc|ccc|ccc}  
\specialrule{.12em}{.06em}{.06em}
\multirow{2}{*}{\textbf{Methods}}  & \multicolumn{6}{c|}{\textbf{Wikitable}}         &\multicolumn{6}{c|}{\textbf{Opendata}}         & \multicolumn{6}{c}{\textbf{WDC Small}}           \\ 
\cline{2-19}
 & \textbf{R@5} & \textbf{R@15} & \textbf{R@25} & \textbf{N@5} & \textbf{N@15} & \textbf{N@25} & \textbf{R@5} & \textbf{R@15} & \textbf{R@25} & \textbf{N@5} & \textbf{N@15} & \textbf{N@25} & \textbf{R@5} & \textbf{R@15} & \textbf{R@25} & \textbf{N@5} & \textbf{N@15} & \textbf{N@25} \\ 
\hline
WarpGate      &  0.3720   &   0.5240   &  0.6072    &   0.8265  &  0.7806    &  0.7548   & 0.3440  &  0.6560  &   0.7416   &  0.9069   &  0.8824    &   0.8427   &  0.5240   &   0.6360   &   0.7040   &  0.9364   &    0.9307  &   0.9154    \\
BERT   &  0.3720   &   0.5067   &   0.5896   &   0.7851  &  0.7806    &   0.7548  & 0.3100    &   0.6000   &   0.7328   &  0.8780   &   0.8577   &  0.8397    &  0.5919   &    0.6107  &   0.6592   &  0.9065   &   0.8821   &   0.8709    \\
$\text{BERT}^* $  &  0.3920   &   0.5267   &   0.5840   &  0.8038   &   0.7912   &  0.7591  &  0.3500   &   0.6580   &   0.7688   &  0.8955   &  0.8853    &   0.8653    &  0.6080   &   0.6373   &    0.7168  & 0.9296    &   0.9236   &    0.9124   \\
SBERT   &  0.3480   &   0.4147   &   0.4800   &  0.7226   &  0.6705    &   0.6446 & 0.3120 & 0.5593  & 0.6728  & 0.8511  &   0.8176  &  0.7892    &   0.5560  &   0.5507   &   0.5816   & 0.9007    &  0.8547    &    0.8229   \\
$\text{SBERT}_\text{ck}$  &   0.3080 &	0.4133	& 0.4736   & 0.6796 &	0.6503	& 0.6316 &  0.3440	& 0.5927	& 0.7220 & 0.8620 &	0.8430 &	0.8265 &  0.5360	 & 0.5147 &	0.5456 & 0.8856 & 0.8320 & 0.7993  \\
$\text{Starmie}$       &   0.3760  &   0.4947   &  0.5496    &   0.7257  &   0.7070   &   0.6820  &    0.4240 &  0.6680    &  0.7936    &   0.9023  &   0.8872   &  0.8859    &  \underline{0.6120}   & 0.6267     &   0.7024   &   0.9159  &    0.9094  &    0.9007   \\
$\text{DeepJoin}$     &   0.3520  &  0.4827    &   0.5600   &   0.7347  &  0.7099    &    0.6984  &  \underline{0.4340}   &   0.6967  &    0.8188  &  0.9051   &   0.9017   &  0.9040     &  \textbf{0.6200}   &   0.6507   &   0.7240   &    0.9238 &   0.9227   &    0.9170   \\
$\text{DeepJoin}_\text{ck}$ & 0.3720 &	0.4827&	0.5384	&0.7312&	0.7039&	0.6790& \textbf{0.4400}&	0.6940&	0.8264 &\underline{0.9282}	&0.9095	&0.9117&0.6040&	0.6400&	0.7320&	0.9406	& 0.9273&	0.9242 \\
\hline
CellSamp&0.4360 & 0.6027 & 0.7264 & 0.8935	& 0.8960 & 0.8938 & 0.3620 & 0.6687& 0.8480 & 0.9157	& 0.9211 & 0.9281  & 0.3320 &	0.4813 &	0.5864 &	0.8314 &	0.8351	 & 0.8228 \\

\specialrule{.12em}{.06em}{.06em}
$\textsf{\textbf{Snoopy}}_\text{bs}$      &  \underline{0.4600}   &    \underline{0.6587}  &   \underline{0.7360}   &   \underline{0.9103}  &   \underline{0.9231}   &  \underline{0.8945}   &   0.3860    &  \underline{0.7113}    &   \underline{0.8552}   &   0.9259  &  \underline{0.9380}    &  \underline{0.9370}     &  0.5920  &   \underline{0.6813}   &   \underline{0.7960}   &  \underline{0.9503}   &   \underline{0.9500}   &    \underline{0.9472}   \\
\textsf{\textbf{Snoopy}}   & \textbf{0.5000}  &   \textbf{0.6600}   &  \textbf{0.7728}   &  \textbf{0.9187}   &   \textbf{0.9243}   &   \textbf{0.9122}  &  0.3920     &  \textbf{0.7180}    &  \textbf{0.8716}    &  \textbf{0.9300}   &   \textbf{0.9440}   &   \textbf{0.9458}   &   0.5760   &   \textbf{0.6827}   &  \textbf{0.8230}    &  \textbf{0.9613}   &  \textbf{0.9600}    &    \textbf{0.9632}   \\


\specialrule{.12em}{.06em}{.06em}
\end{tabular}}
\end{table*}

\subsubsection{Baselines} The following baselines are evaluated. 
\begin{itemize} 

\item{} \textbf{WarpGate}~\cite{WarpGate} is the latest system prototype for dataset discovery, which suggests using pre-trained table embedding models as column encoders. Since the embedding model~\cite{0002TGL21} used in the original paper is not available, we choose TURL~\cite{turl}, a well-adopted table embedding model.
\item{} \textbf{BERT}~\cite{bert} is a pre-trained language model. We use \textit{bert-base-uncased}\footnote{https://huggingface.co/bert-base-uncased} to get the embedding for each column, and use the default input length limit of 512.
\item{} $\textbf{BERT}^*$  adopts the contrastive loss to fine-tune BERT for the semantic join discovery task.
\item{} \textbf{SBERT}~\cite{sentencebert} is a specialized variant of BERT that is designed for sentence-level embeddings. We use MPNet~\cite{mpnet} as the backbone and set the input size limit as 512.
\item{}  $\textbf{SBERT}_\text{ck}$ is a variant of  SBERT , which divides the unique column values by chunks to ensure the 512-token limit, and averages chunk embeddings from SBERT to derive final column embeddings. 
\item{} \textbf{DeepJoin}~\cite{Deepjoin} is a state-of-the-art join discovery method, which fine-tunes SBERT to obtain column embeddings and samples the most frequent cells for each column to ensure the 512-token limit. We use the best-performance MPNet~\cite{mpnet} as the backbone, following~\cite{Deepjoin}.
\item{} $\textbf{DeepJoin}_\text{ck}$ is a chunking-based variant of DeepJoin. It employs the chunking strategy only during inference, as mini-batch training is not supported with chunking. 

\item{} $\textbf{Starmie}$~\cite{starmine} is a state-of-the-art dataset discovery method. It contextualizes column representations via fine-tuning RoBERTa to facilitate
unionable table search in data lakes. For fairness, in our evaluation, we employ the single-column encoder and fine-tune it for semantic join discovery.  

\item{} $\textbf{\textsf{Snoopy}}_\text{bs}$ is a base version of \textsf{Snoopy}, which adopts the traditional contrastive loss in Equation (\ref{eq:cl}) for training.
\item{} \textbf{CellSamp} is a cell-level method with sampling  by randomly selecting $n_s = 20$ cells per column to improve efficiency. It compute pairwise similarity scores of sampled cells, and the joinability score is determined as the average of the maximum similarity scores between each cell in the query column and all cells in the candidate column.
 
\end{itemize}

We implement baselines following their original settings and tune the parameters for the best-performing results. Since the PTM-based methods need textual sequences as input for fine-tuning, we use the proposed text-level synthesis strategy to construct the same training data for $\text{BERT}^*$, DeepJoin, Starmie, and $\textsf{Snoopy}_\text{bs}$. 
We did not observe accuracy improvements by fine-tuning TURL with our training data, as it requires some extra information such as captions and entities in the knowledge graph~\cite{turl}. Consequently, we directly utilize the pre-trained TURL to implement WarpGate.



\subsubsection{Metrics} Following Deepjoin~\cite{Deepjoin}, we adopt Recall@$k$ and NDCG@$k$ to evaluate effectiveness, where $k$ is set to 5, 15, and 25 by default. Recall@$k$ is defined as $\frac{|\hat{\mathcal{S}} \cap \mathcal{S}|}{k}$, where $\hat{\mathcal{S}}$ and  $\mathcal{S}$ denote the top-$k$ result obtained by the specific method and an exact solution, respectively.
NDCG@$k$ is defined as  $\frac{\text{DCG@}k}{\text{IDCG@}k}$,
where $\text{DCG@}k=\sum_{i=1}^k \frac{J(C_Q, \hat{C}_i)}{\operatorname{log}_2(i+1)}$ and $\text{IDCG@}k=\sum_{i=1}^k \frac{J(C_Q, {C}_i)}{\operatorname{log}_2(i+1)}$, and $\hat{C}_i$ and $C_i$ denote the columns ranked in the $i$-th position in the results obtained by the specific method and an exact solution, respectively. For efficiency, we evaluate the runtime of all the methods. The above metrics are averaged over all the queries.

\subsubsection{Implementation Details} We implement \textsf{Snoopy} in PyTorch. We set the batch size to 64, the length $\beta$ of the negative queue to 32, and the momentum coefficient $\alpha$  to 0.9999. We use Adam~\cite{adam} as the optimizer, and set the learning rate to 0.01. We set the number $l$ of proxy column matrix to 90, and the cardinality $m$ per proxy column matrix to 50 by default.  For $\textsf{RCL}$, we first generate
a list of sorted joinability scores, where each score is randomly generated from (0.6, 0.9), and then apply the embedding-level synthesis strategy to generate positive columns. The length of the positive ranking list is set to 3. For cell matching, we follow previous studies~\cite{Deepjoin,Pexeso} to use fastText~\cite{fasttext} as  cell embedding function, and normalize all the vectors to unit length. 
We use Euclidean distance as the distance function $d$, and the threshold $\tau$ of cell matching is set to 0.2 by default. All experiments were conducted on a computer with an Intel Core i9-10900K CPU, an NVIDIA GeForce RTX3090 GPU, and 128GB memory.  The programs were implemented in Python\footnote{
The source code and datasets are available at 
https://github.com/ZJU-DAILY/Snoopy.}.

\subsection{Effectiveness Evaluation}
\label{sec:exp_effectiveness}

Table~\ref{tab:effectiveness} presents the Recall@$k$ and NDCG@$k$ of $\textsf{Snoopy}$ and other baseline methods.

\noindent \textbf{$\textsf{Snoopy}_\text{bs}$ vs. competitors.}
The first observation is that the proposed $\textsf{Snoopy}_\text{bs}$ consistently outperforms other PTM-based methods across almost all evaluation metrics on the three datasets. 
Specifically, $\textsf{Snoopy}_\text{bs}$ demonstrates an average improvement of  12\% in Recall@25 and 8\% in NDCG@25, compared with the best column-level competitor. We contribute this improvement to the high-quality column embeddings derived by AGM-based column projection, which is capable of capturing the implicit relationships between column pairs. Note that, while $\textsf{Snoopy}_\text{bs}$ exhibits slightly lower Recall@5 compared to certain baselines on specific datasets, its NDCG@5 consistently outperforms them. The discrepancy stems from the fact that some joinable columns with the same joinability are ordered sequentially in the returned results. When $k$ is small, Recall@$k$ overlooks columns ranked beyond the $k$-th position, despite their equivalent joinability to those ranked at $k$. 
The second observation is that chunking does not always lead to performance improvements. Specifically, $\text{SBERT}_\text{ck}$ outperforms $\text{SBERT}$ on Opendata but performs worse on the other two datasets.  Similarly, $\text{DeepJoin}_\text{ck}$ surpasses $\text{DeepJoin}$ on Opendata and WDC Small but underperforms on WikiTable. This is because the chunking-and-averaging strategy inevitably leads to information loss due to the naive averaging operation, sometimes negating the benefits of incorporating additional cells.
The third observation is that the sampling-based cell-level method, CellSamp, performs well on the Wikitable and Opendata datasets due to its finer-grained computations compared to column-level methods.
However, it underperforms on the WDC Small dataset, where the shortest column contains 27 cells, exceeding CellSamp's sampling threshold of 20. While increasing the sampling threshold could incorporate more cells, this would substantially reduce the efficiency of the online search, as even with 20 cells, the computational cost remains orders of magnitude higher than column-level methods (see Table~\ref{tab:online_efficiency}).

\noindent \textbf{$\textsf{Snoopy}$ vs. competitors.}
With rank-aware optimization, $\textsf{Snoopy}$ outperforms the existing SOTA column-level methods by  16\%  in Recall@25 and 10\% in NDCG@25 on average. This is because the rank-aware optimization takes a list of joinable columns into consideration, which enables $\textsf{Snoopy}$ to better distinguish the ranks of different joinable columns.

To demonstrate the superiority of our column embeddings in bridging the semantics-joinability-gap, being size-unlimited, and permutation-invariant, we conduct in-depth analyses.
\noindent \textbf{Bridge the semantics-joinability-gap.}
First, we randomly select 1K columns from Opendata and use the trained Deepjoin and   $\textsf{Snoopy}$ to encode these columns into embeddings. We then visualize these embeddings using t-SNE, as shown in Fig.~\ref{fig:visual_embed}. The embedding distribution of Deepjoin is more uniform, indicating less evident similarity differences compared to $\textsf{Snoopy}$. This is because Deepjoin's embeddings focus more on the column semantic types rather than cell semantics, resulting in many columns having similar and indistinguishable embeddings.
We then compute the joinability of each query column with its top-25 joinable columns from Opendata, and the similarity between each query column's embedding and its top-25 joinable columns' embeddings obtained by Deepjoin and $\textsf{Snoopy}$.
Fig.~\ref{fig:visual_dist} depicts the distributions of joinability and column embedding similarity. It is observed that, the similarity distribution of Deepjoin's embeddings poorly fits the joinability distribution. In contrast, our $\textsf{Snoopy}$ demonstrates a closer alignment with the joinability distribution, effectively bridging the semantics-joinability-gap. Visualizations of other
datasets are similar and omitted.

\begin{figure}
\centering
\subfloat[DeepJoin]{
\begin{minipage}[t]{0.4\linewidth}
\centering
\includegraphics[width=1\linewidth]{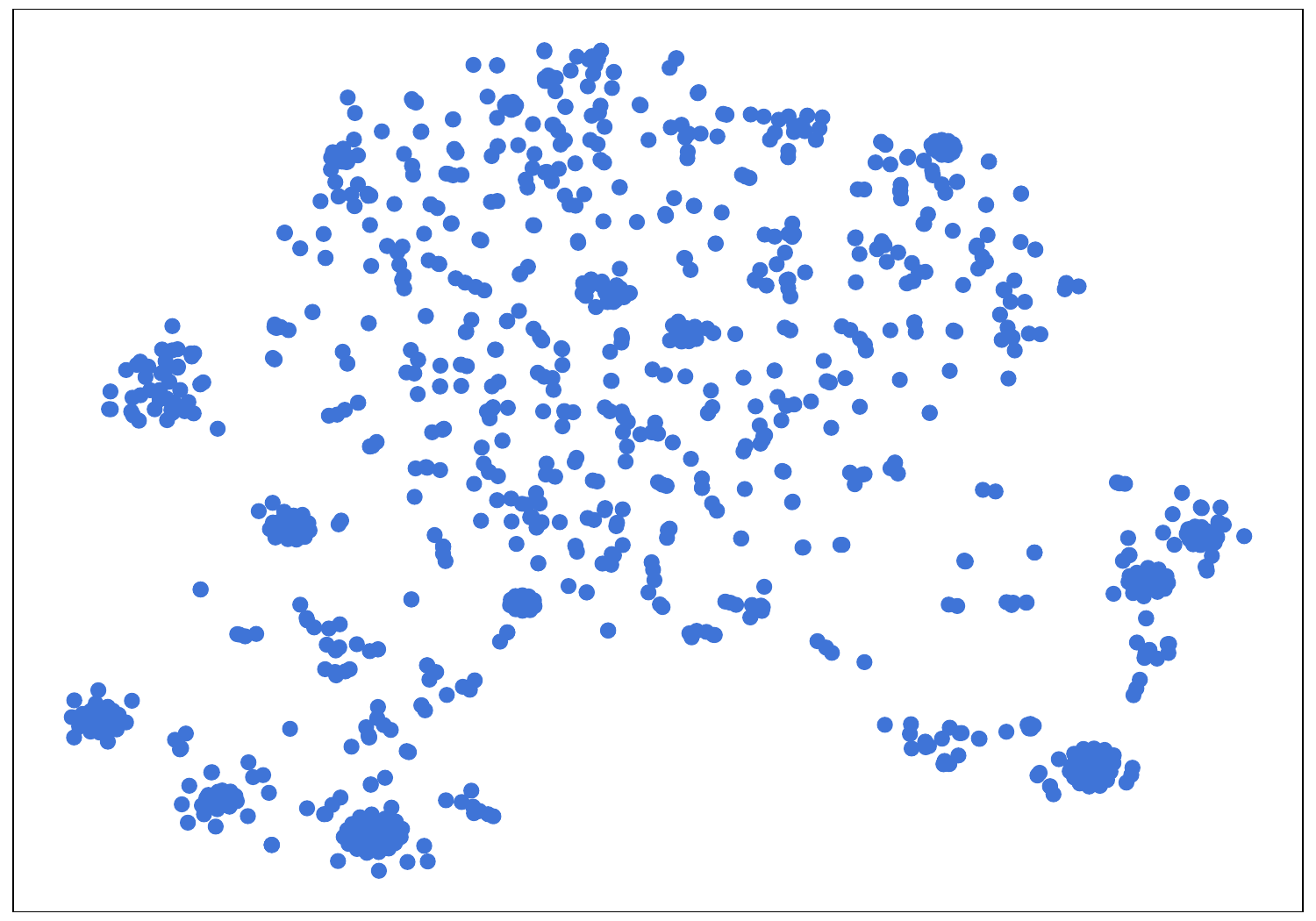}
    \vspace{-3mm}
\end{minipage}}
\hspace{7mm} 
\subfloat[\textsf{Snoopy}]{
\begin{minipage}[t]{0.4\linewidth}
\centering
\includegraphics[width=1\linewidth]{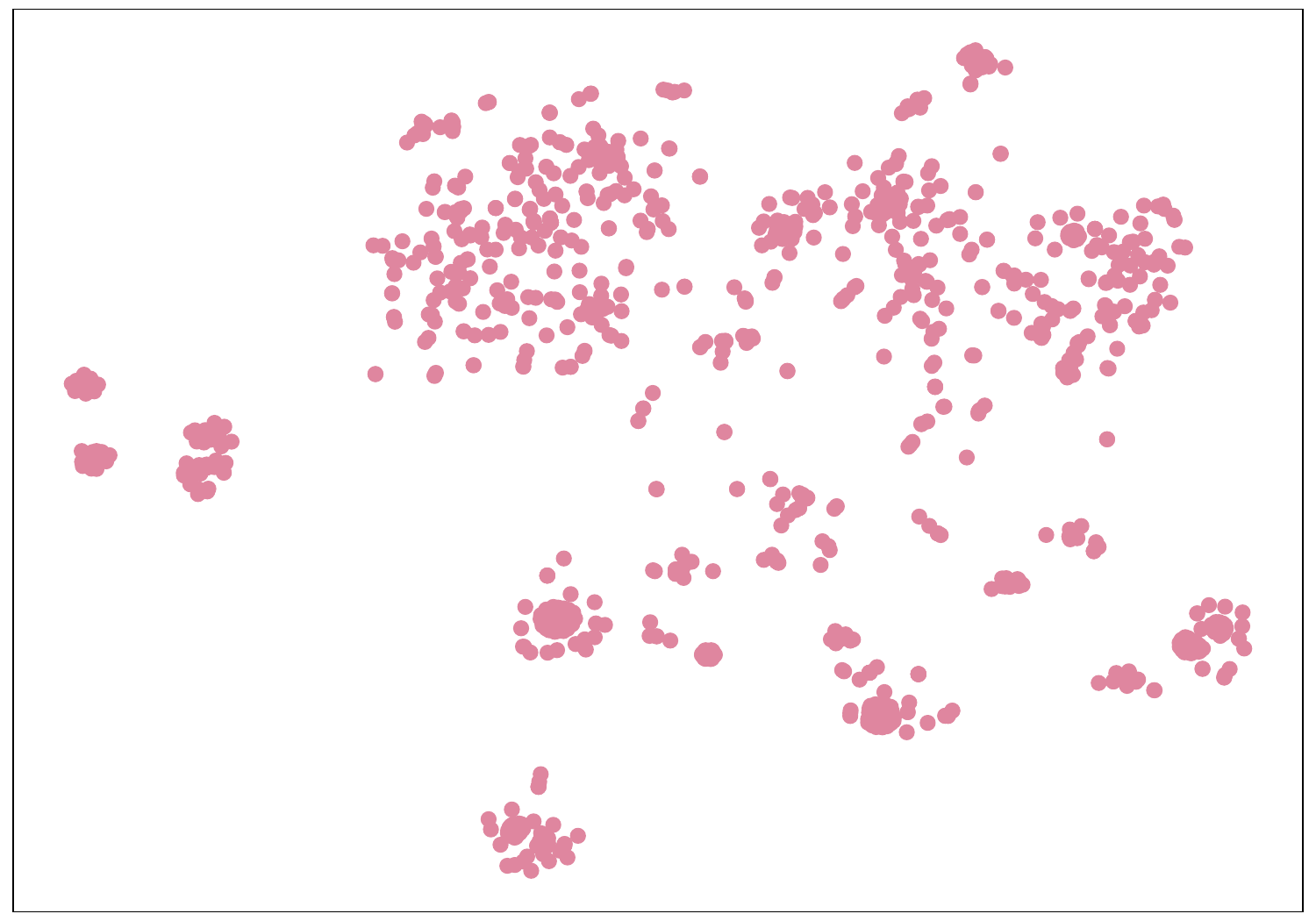}
    \vspace{-3mm}
\end{minipage}}
\caption{Visualization of column embeddings of Opendata learned by Deepjoin and our proposed \textsf{Snoopy}.
}
\label{fig:visual_embed}
\vspace{-3mm}
\end{figure}

\begin{figure}
\centering

\subfloat[DeepJoin]
{
\begin{minipage}[t]{0.45\linewidth}
\centering
\includegraphics[width=1\linewidth]{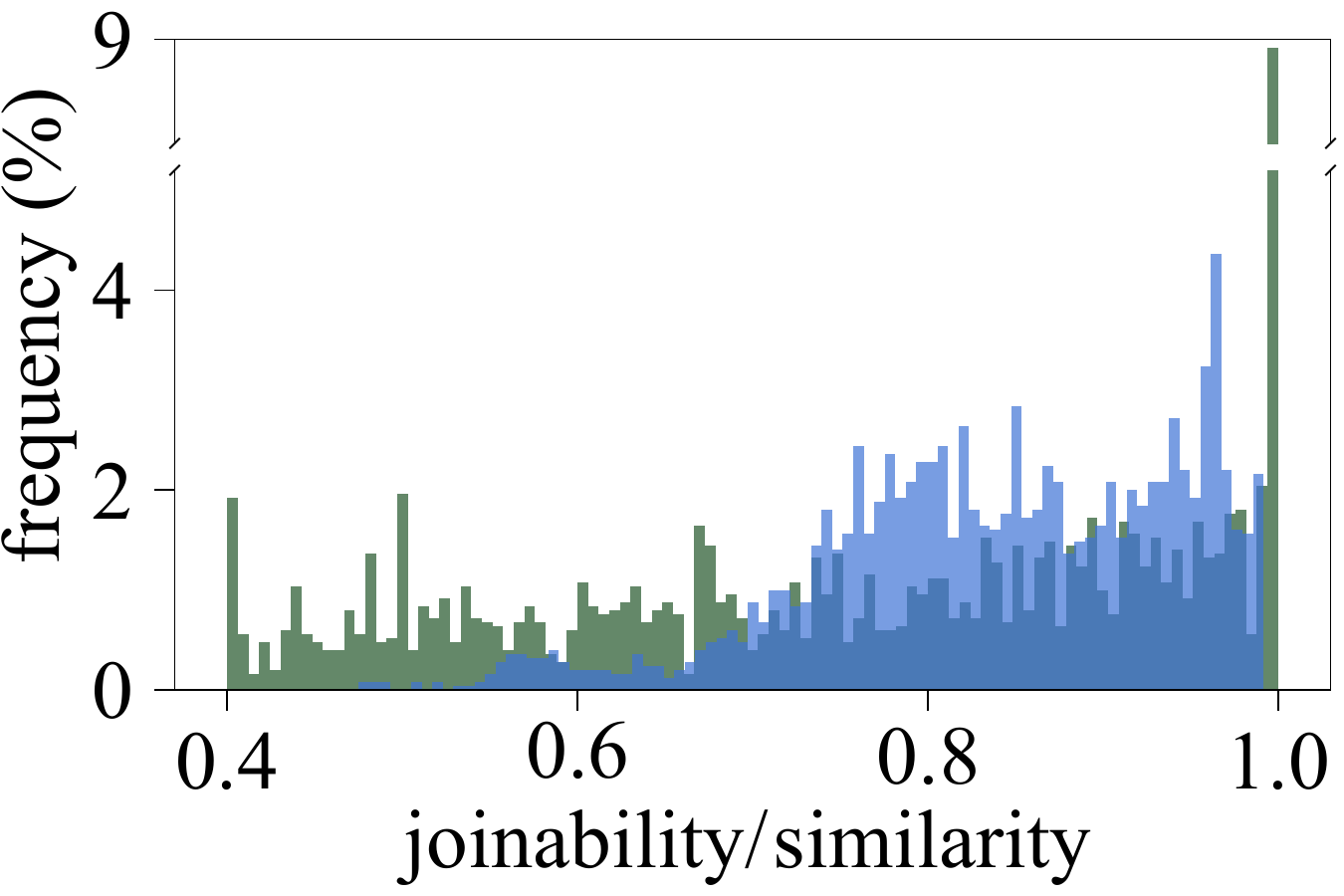}
 
    \vspace{-3mm}
\end{minipage}}
\hspace{3mm}
\subfloat[\textsf{Snoopy}]{
\begin{minipage}[t]{0.45\linewidth}
\centering
\includegraphics[width=1\linewidth]{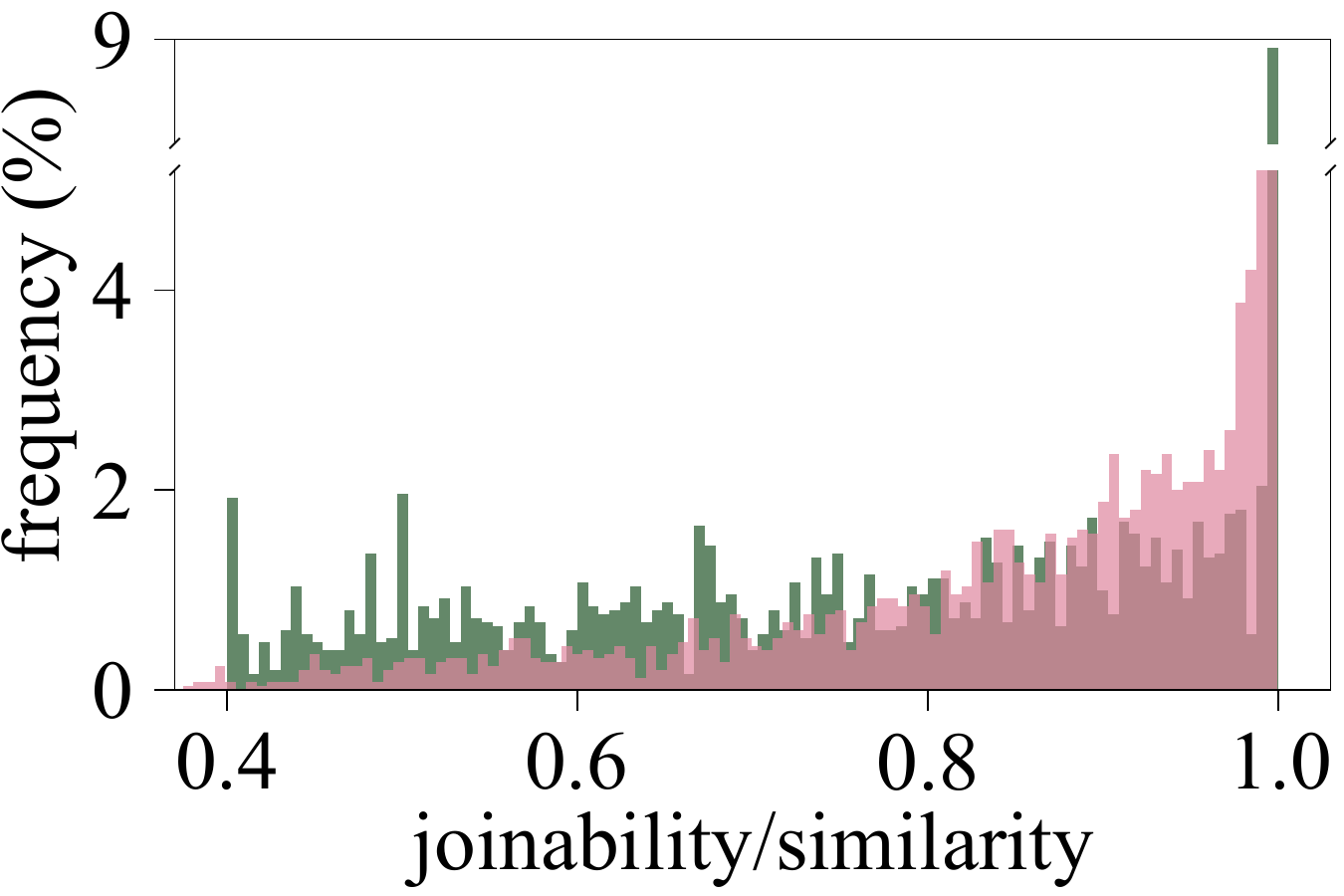}
 
    \vspace{-3mm}
\end{minipage}}
\caption{
Joinability distribution (in green) vs. column embedding similarity distribution obtained by Deepjoin and $\textsf{Snoopy}$.
}
\label{fig:visual_dist}
\vspace{-4mm}
\end{figure}


 
 

\begin{figure*}
\vspace{-3mm}
\centering
\begin{minipage}{\linewidth}
    \centering
       \includegraphics[width=0.5\textwidth]{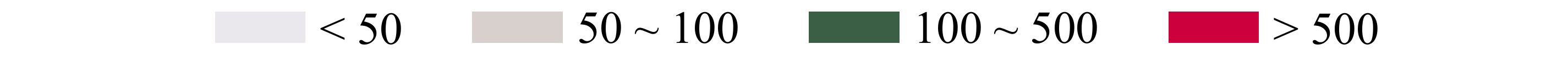}\\
       \vspace{-2mm}
  \end{minipage}
   
\subfloat[Recall@25 of different column sizes]{
\begin{minipage}[t]{0.48\linewidth}
\centering
\includegraphics[width=1\linewidth]{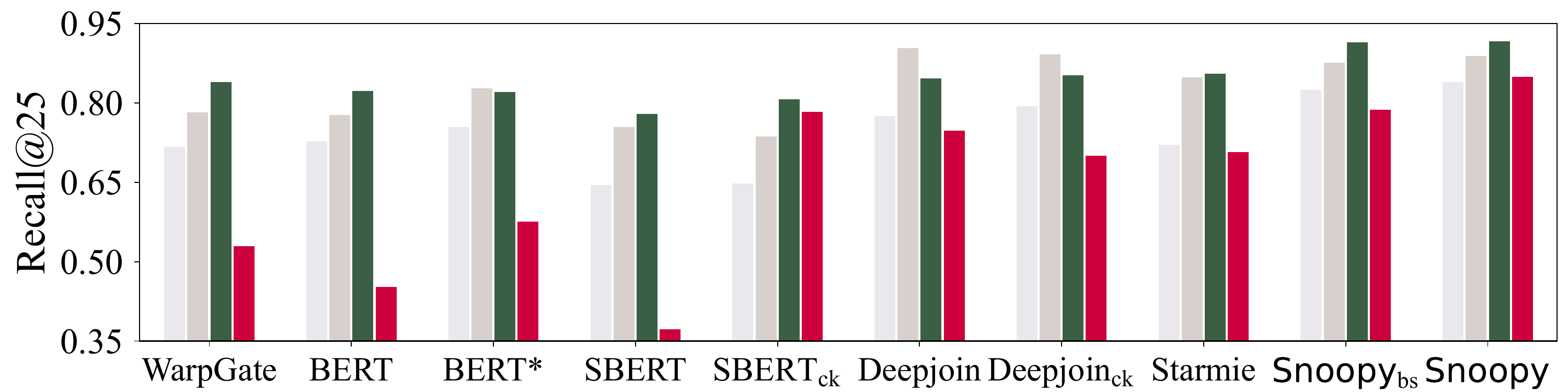}
    \vspace{-3mm}
\end{minipage}}
\subfloat[NDCG@25 of different column sizes]{
\begin{minipage}[t]{0.48\linewidth}
\centering
\includegraphics[width=1\linewidth]{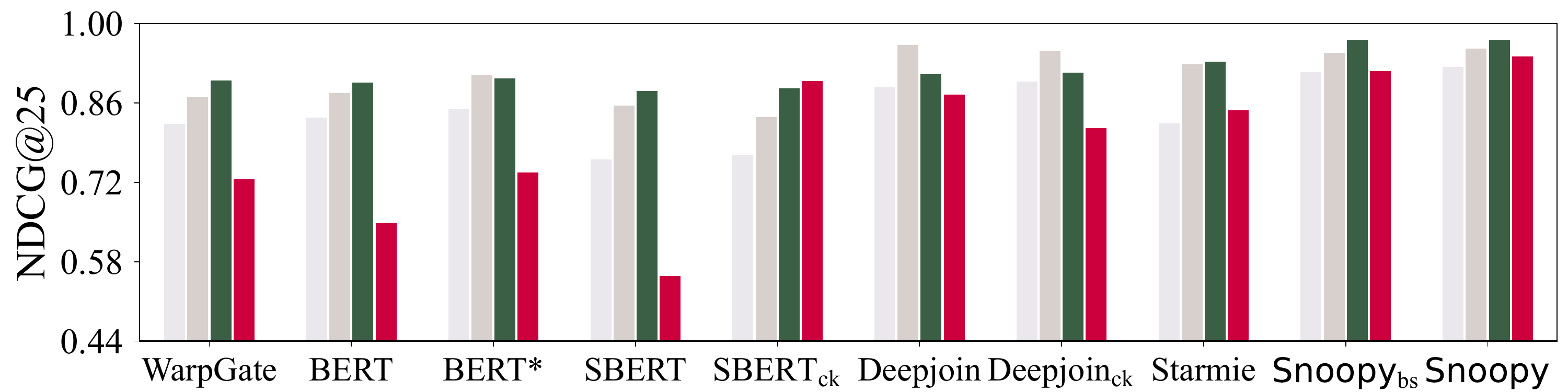}
    \vspace{-3mm}
\end{minipage}
}

\vspace{-2mm}
\caption{ Effectiveness evaluation in grouping columns of different sizes on the Opendata dataset.}  
\label{fig:impact of size}
\vspace{-4mm}
\end{figure*}

\noindent \textbf{Impact of column size.}
Then, we divide query columns of Opendata into four groups based on the column size ($<$50, 50-100, 100-500, and $>$500), and show the results on each group in Fig.~\ref{fig:impact of size}. 
This experiment is conducted on the Opendata dataset due to the limited number of long columns in Wikitable and short columns in WDC Small, respectively.
As observed, when the column size grows to more than 500, $\textsf{Snoopy}$ consistently maintains high performance, while most other baselines show a decline.
Notably, $\text{SBERT}_\text{ck}$ mitigates performance degradation of long columns on Opendata dataset. However, $\text{DeepJoin}_\text{ck}$ shows lower accuracy on long columns compared with $\text{DeepJoin}$. This discrepancy may stem from the inconsistency between fine-tuning (without chunking) and inference (with chunking). Unfortunately, the chunking strategy cannot be applied during training, limiting the overall performance.
Although DeepJoin employs a frequency-based sampling strategy to mitigates the negative impact of long columns, the performance is still not as stable as $\textsf{Snoopy}$. 


\noindent \textbf{Impact of permutation.} Finally, we explore the impact of cell permutation on effectiveness. Specifically, we randomly permutate the order of cells within each column in each dataset 50 times, and plot the distributions of Recall@25 in Fig.~\ref{fig:impact of order}. We omit the results of \textsf{Snoopy} because 
they are theoretically unaffected by permutations.
We can observe that the results of all the PTM-based methods are affected by the order of cells. In particular, SBERT exhibits the largest spread on all the datasets, attributed to its specialized sentence-level optimization, which considers sentence structure as an ordered sequence. Furthermore, the chunking strategy cannot mitigate the sensitivity, as each chunk is still processed as a sequential input by the PTMs, resulting in chunk embeddings being influenced by the cell order within each chunk.
In contrast, the  column representation of \textsf{Snoopy} is theoretically permutation-invariant, which is consistent with the definition of joinability that is agnostic to the cell orders.

Although $\textsf{Snoopy}$ primarily focuses on semantic join, 
it can also be applied to equi-join discovery by configuring the cell matching threshold $\tau = 0$ and removing data augmentation during training data generation. The effectiveness evaluation of equi-join discovery is detailed in Appendix B.

\begin{figure}
\centering

\subfloat{
\includegraphics[width=1\linewidth]{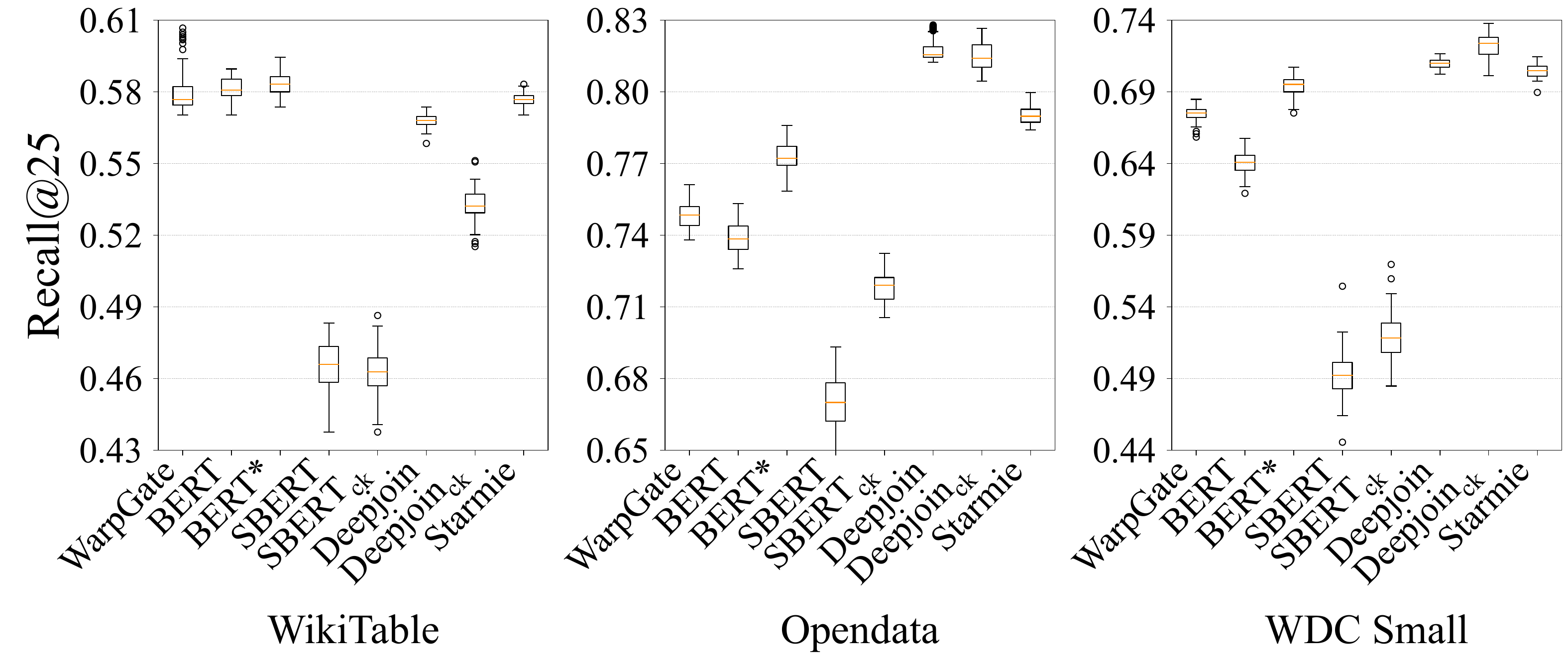}\hfill
}
\vspace{-4mm}
\caption{ Order sensitivity study by random permutation of cells. }
\label{fig:impact of order}
\vspace{-2mm}
\end{figure}


\subsection{Ablation Study}
\label{sec:exp_ablation} 

We conduct an ablation study of key components of \textsf{Snoopy}, with results shown in Table~\ref{tab:ablation}.

First, we replace the the AGM-based column projection (CP) with the widely-used pooling techniques, i.e., max-pooling  (MaxP),  min-pooling (MinP), and average-pooling (AvgP), to obtain the column embeddings. 
We can observe that the search accuracy dramatically drops.
This is because the widely used pooling methods result in much information loss when transforming the column matrices to column embeddings.  In contrast, our
AGM-based column projection well maintains the informative signals for joinability determination. 

Then, we remove the rank-aware contrastive learning (RCL) module, and extend the pivot selection methods in metric space~\cite{ZhuCGJ22} to the proxy column selection process:
(i) RAN selects columns from the repository $\mathcal{R}$ randomly as proxy columns; (ii) FFT~\cite{fft}  iteratively adds the new column  which is the most different from the current selected columns to the proxy column set; and (iii) PCA~\cite{pca} performs dimensionality reduction to select representative proxy columns based on FFT mechanism.
We can observe that without RCL, the Recall@25 drops at least 19.5\% on WikiTable, 12.9\% on Opendata, and 22.9\% on WDC Small. This demonstrates the effectiveness of RCL that is able to identify good proxy columns which yields promising effectiveness.

\begin{table} \small
\centering
\caption{Ablation study on three datasets.  Bold score indicates the performance under the default setting.}
\vspace{-2mm}
\renewcommand{\arraystretch}{1.2} 
\setlength{\tabcolsep}{0.8mm}{
\begin{tabular}{l|cc|cc|cc} 
\specialrule{.12em}{.06em}{.06em}
\multicolumn{1}{c|}{\multirow{2}{*}{\textbf{Methods}}} & \multicolumn{2}{c}{\textbf{WikiTable}} & \multicolumn{2}{c}{\textbf{Opendata}} & \multicolumn{2}{c}{\textbf{WDC Small}}  \\
\cline{2-7}
                         & \textbf{R@25}     & \textbf{N@25}               & \textbf{R@25}   & \textbf{N@25}                & \textbf{R@25}   & \textbf{N@25}                 \\ 
\hline
\multicolumn{1}{c|}{\textsf{Snoopy}}                & \textbf{0.7728} &  \textbf{0.9122}                   & \textbf{0.8716} &  \textbf{0.9458}                    & \textbf{0.8230} &    \textbf{0.9632}                   \\ 
\hline
  w/o CP + MaxP          &  0.6344      & 0.7791    &  0.7716     & 0.8725                 &    0.6984        & 0.9087                                            \\
  w/o CP + MinP                       &   0.6440       &  0.7784                  &     0.7268   &   0.8462                &   0.7216     &   0.9102                   \\
  w/o CP + AvgP                       &     0.7210     &   0.8780                 &   0.8300     &   0.9199                  &  0.7856      & 0.9500                     \\ 
\hline
  w/o RCL + RAN                       &  0.6184        &   0.7747                 &  0.7592      &  0.8678                   &  0.6080      &    0.8432                  \\
  w/o RCL + FFT                       &   0.6168       &  0.7677                  &  0.7188      &  0.8377                   &  0.6344      &   0.8652                   \\
  w/o RCL + PCA                       &   0.6224       &  0.7759                  &   0.7008     &  0.8267                   &   0.6064     &   0.8500                   \\
\specialrule{.12em}{.06em}{.06em}
\end{tabular}}
\label{tab:ablation}
\vspace{-3mm}
\end{table}

\subsection{Efficiency Evaluation}
\label{subsec:efficiency}

We report the runtime of $\textsf{Snoopy}$ and baseline methods. We omit the results of some methods because they demonstrate similar results to the specific methods already included. We also include PEXESO~\cite{Pexeso}, which is a cell-level exact solution.

We vary the number of columns in the WDC Large dataset from 100K to 1M, and report the average online processing time per query over 50 independent tests, as shown in Table~\ref{tab:online_efficiency}. 
Note that, for all column-level methods, we apply HNSW~\cite{HNSW} for ANN search.
We have the following observations:
(i) PEXESO and $\textsf{Snoopy}$ demonstrate high efficiency in online encoding. However, other PTM-based methods need a relatively longer online encoding time (10x longer for BERT*,  Starmie and DeepJoin, and 20x longer for WarpGate compared with \textsf{Snoopy}). This is because these transformer-based pre-trained models have complex architectures~\cite{FCS}, while \textsf{Snoopy} has a lightweight column projection mechanism. 
(ii) All the column-level methods significantly outperform the cell-level PEXESO in total time. Furthermore, \textsf{Snoopy} is 3.5x faster than other column-level methods on average, due to its shorter online encoding time.
(iii) CellSamp improves efficiency over PEXESO, but remains orders of magnitude slower than column-level methods due to the requirement of online pairwise cell similarity computations. 
(iv) Column-level methods demonstrate stable total time, even with an increase in dataset size. This is because the time of ANN search using HNSW is relatively stable, which is consistent with the prior study~\cite{Deepjoin}.
We also report the runtime of the offline stage in Appendix C.

\begin{table}[t] \small
\centering
\caption{Online processing time (ms) on WDC Large. Total time comprises query column encoding time and online search time.}
\vspace{-2mm}
\label{tab:online_efficiency}
\renewcommand{\arraystretch}{1.1} 
\begin{threeparttable}
\setlength{\tabcolsep}{0.42mm}{
\begin{tabular}{c|c|ccccc} 
\specialrule{.12em}{.06em}{.06em}
\multirow{2}{*}{\textbf{Methods}} & \multirow{2}{*}{\begin{tabular}[c]{@{}c@{}}query column \\ encoding  \end{tabular}} & \multicolumn{5}{c}{total online processing}                      \\ 
\cline{3-7}
                        &                                 & 100K      & 200K      & 300K  & 500K  & 1M     \\ 
\hline
PEXESO                  & 0.82                           & 311,761 & 655,011 &   \textsf{OOM}    &  \textsf{OOM}     &    \textsf{OOM}    \\
CellSamp    &  0.68         & 21,645    & 44,462 & 64,346 & 104,115 & 203,537 \\
\hline
WrapGate                & 20.56                           & 24.55     & 24.68     & 24.74 & 24.54 & 24.46  \\
BERT*                   & 8.89                            & 12.85     & 12.80     & 12.61 & 12.45 & 12.74  \\
Starmie                   & 9.22                          & 14.69     & 14.57     & 14.65 & 14.76 & 14.68     \\
DeepJoin                & 11.57                           & 15.56     & 15.54     & 15.56 & 15.54 & 15.57  \\
$\text{DeepJoin}_\text{ck}$             & 15.08     & 19.25      & 19.33      &19.22  &19.43   &19.82    \\
\hline
\textsf{Snoopy}               & 1.10                            & \textbf{5.10}      & \textbf{4.91}      & \textbf{5.00}  & \textbf{5.17}  & \textbf{4.97}   \\
\specialrule{.12em}{.06em}{.06em}
\end{tabular}}
\end{threeparttable}
\begin{tablenotes}\small
    \item  $^1$``\textsf{OOM}'' indicates out of memory under 128GB memory.
\end{tablenotes}
    \vspace{-2mm}
\end{table}

\begin{table}[t]
\small
\centering
\caption{ Ground truth shifts with different cell embedding functions.} 
\vspace{-2mm}
\label{tab:gt_shifts}
\renewcommand{\arraystretch}{1.1} 
\setlength{\tabcolsep}{1mm}{
\begin{tabular}{c|c|c|c} 
\specialrule{.12em}{.06em}{.06em}
\textbf{Cell embed. func.} & WikiTable & Opendata & WDC Small  \\ 
\hline
Word2vec          &       0.0341 &	0.0194 &	0.0212    \\
MPNet             &       0.0279 &  0.0085        &   0.0103         \\
\specialrule{.12em}{.06em}{.06em}
\end{tabular}}
\vspace{-6mm}
\end{table}

\begin{table*}[t]
\footnotesize
\centering
\vspace{-5mm}
\caption{ Performance  of \textsf{Snoopy} in the dynamic scenario. The best results are in bold.} 
\vspace{-2mm}
\label{tab:dynamic}
\renewcommand{\arraystretch}{1.3} 
\setlength{\tabcolsep}{0.75mm}{
\begin{tabular}{c|ccc|ccc|ccc|ccc|ccc|ccc}  
\specialrule{.12em}{.06em}{.06em}
\multirow{2}{*}{\textbf{Repository}}  & \multicolumn{6}{c|}{\textbf{Wikitable}}         &\multicolumn{6}{c|}{\textbf{Opendata}}         & \multicolumn{6}{c}{\textbf{WDC Small}}           \\ 
\cline{2-19}
 & \textbf{R@5} & \textbf{R@15} & \textbf{R@25} & \textbf{N@5} & \textbf{N@15} & \textbf{N@25} & \textbf{R@5} & \textbf{R@15} & \textbf{R@25} & \textbf{N@5} & \textbf{N@15} & \textbf{N@25} & \textbf{R@5} & \textbf{R@15} & \textbf{R@25} & \textbf{N@5} & \textbf{N@15} & \textbf{N@25} \\ 
\hline
$\mathcal{R}_0$      &  \textbf{0.5320}   &   \textbf{0.7347}   &  0.7576    &   \textbf{0.9270}  &  0.9194    &  0.8843   & \textbf{0.4860}   &  \textbf{0.8113}   &  0.7744    &  \textbf{0.9311}  & 0.9416   & 0.9338   & \textbf{0.6640} &  \textbf{0.7453}    &  0.7832    &   \textbf{0.9682}  &  0.9601     & 0.9524   \\


$\mathcal{R}_1$      &  0.5160   &   0.7090   &  0.7752    &   0.9255  &  \textbf{0.9253}    &  0.8909   &  0.4460  &  0.8000   &   0.8120   &  0.9310  &  \textbf{0.9448}  & 0.9420   & 0.6560 &  0.7360    &  0.8176    &  0.9590   &  \textbf{0.9605}     & 0.9568   \\


$\mathcal{R}_2$      &  0.4760   &   0.6813   &  0.7704    &   0.9200  &  0.9251    &  0.8841   &  0.4300  &  0.7567   &   0.8504   &  0.9348  & 0.9454   &  0.9475  & 0.6160 & 0.7053     & 0.8192     & 0.9668    &   0.9604    &  0.9618  \\


$\mathcal{R}_3$   &  0.5000   &   0.6600    &  \textbf{0.7728}   &   0.9187   &   0.9243   &   \textbf{0.9122}   & 0.3920     &   0.7180    &  \textbf{0.8716}    &   0.9300    &   0.9440    &   \textbf{0.9458}   &   0.5760   &   0.6827   &  \textbf{0.8230}    &  0.9613   &  0.9600    &    \textbf{0.9632}   \\

\specialrule{.12em}{.06em}{.06em}
\end{tabular}}
\vspace{-6mm}
\end{table*}

\subsection{Parameter Sensitivity}


\begin{figure}[t]
\centering
\subfloat[Varying $m$]{
    \includegraphics[width=0.45\linewidth]{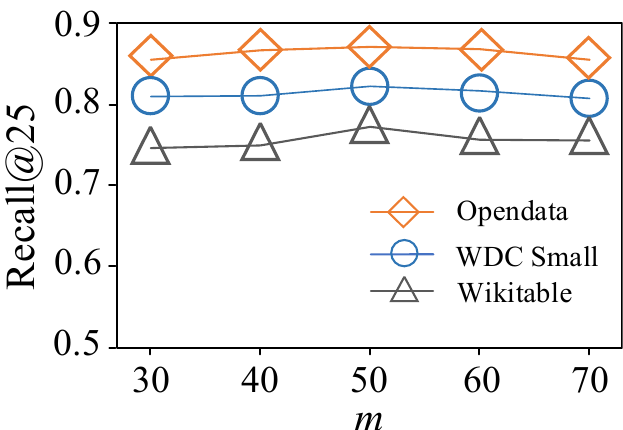}
}\hfill
\subfloat[Varying $l$]{
\includegraphics[width=0.45\linewidth]{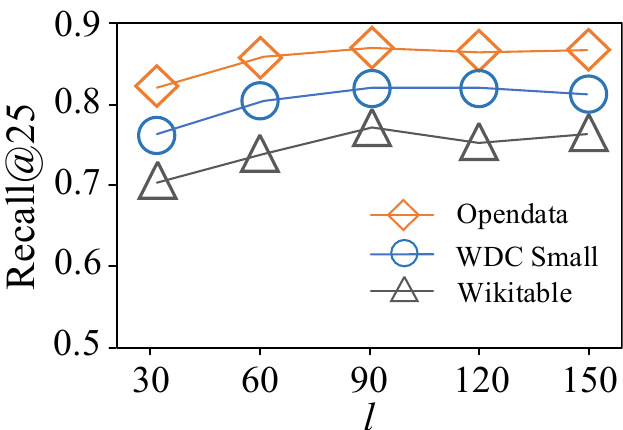}
}

\subfloat[Varying $\tau$]{
    \includegraphics[width=0.45\linewidth]{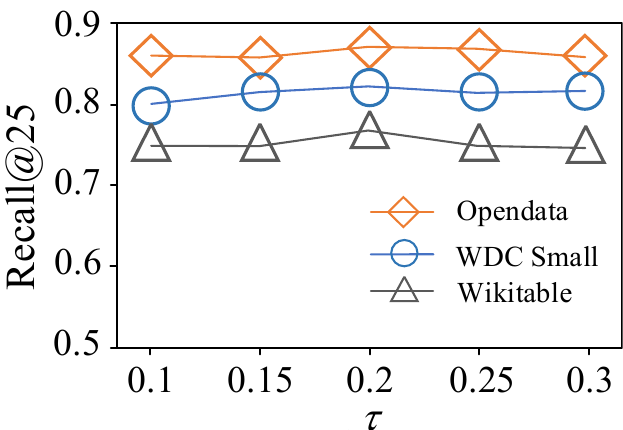}
}\hfill
\subfloat[Varying $s$]{
    \includegraphics[width=0.45\linewidth]{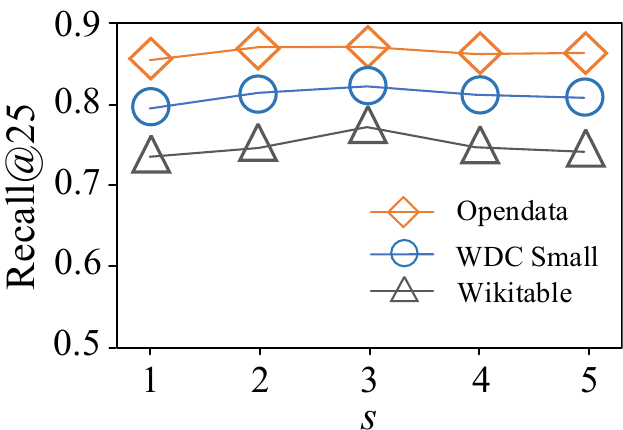}
}
\vspace{-2mm}
\caption{Sensitivity study of parameters of \textsf{Snoopy}.}
\label{fig:sense}
\vspace{-6mm}
\end{figure}

We study the influence of four important hyper-parameters on the performance of \textsf{Snoopy}.

First, we explore the impact of the number $m$ of elements in each proxy column. We vary the value of $m$ and show the result in Fig.~\ref{fig:sense}(a).
It is observed that the performance of \textsf{Snoopy} is not sensitive to the hyper-parameter $m$.

Next, we vary the number $l$ of used proxy columns, and show the results in Fig.~\ref{fig:sense}(b).
It is observed that as $l$ increases from 30 to 90, Recall@25  exhibits a noticeable improvement, which demonstrates that more information can be captured by increasing the number of proxy columns. When $l$ continues to increase, the recall no longer increases but tends to be stable. Thus, it is advisable to set $l$ to a relatively large value. For best performance across all datasets, we set $l$ to 90.

Then, we vary the threshold $\tau$ of cell matching from 0.1 to 0.3, and show the Recall@25 in Fig.~\ref{fig:sense}(c). It is observed that the performance is relatively stable under different thresholds $\tau$,  indicating that \textsf{Snoopy} is capable of accommodating different degrees of semantic join. Note when $\tau$ is set to 0, semantic-join degrades to equi-join.

Finally, we explore the impact of the length $s$ of the positive ranking list on the search results, as depicted in Fig.~\ref{fig:sense}(d). It is observed that as $s$ grows, the Recall@25 first increases and then gradually decreases. This is because treating the low-ranked columns in the ranking list as positive examples sometimes hurts the contrastive learning process, especially for the columns with high ranks. Hence, we set $s$ to 3 for the best performance on all the datasets.

\subsection{ Further Experiments} 
\label{subsec:further_exp}
We further (i) explore how ground truth in evaluation shifts when using different cell embedding functions; (ii) evaluate the effectiveness of \textsf{Snoopy} under the dynamic scenario; and (iii) explore some optimizations to overcome the size limits in existing PTM-based methods.

\noindent{ (i) \underline{\textit{Impact of cell embedding function}}}. 
We explore two cell embedding functions: Word2vec\footnote{https://huggingface.co/LoganKilpatrick/GoogleNews-vectors-negative300}, which is less powerful, and MPNet-based sentence-embedding model\footnote{https://huggingface.co/sentence-transformers/all-mpnet-base-v2}, which is more powerful than the default fastText.
Specifically, for each query column, we first obtain its top-25 ranked column list $L_f$ based on the semantic joinability using fastText. Then, we recompute joinabilities between the query column and those 25 columns using Word2vec and MPNet to generate new ranked lists $L_w$ and $L_m$, respectively. 
Finally, we measure ground truth shifts by quantifying the similarity between rankings $L_f$ and $L_w$ (resp. $L_m$) using $1-\rho =  \frac{6 \sum_{i=1}^s d_i^2}{s(s^2 - 1)} \in [0,1] $, where $\rho$ is the Spearman's rank correlation coefficient~\cite{Spearman}, $d_i$ represents the rank difference of column $C_i$ between $L_f$ and $L_w$ (resp. $L_m$), and $s$ is the list length.
The results are presented in Table~\ref{tab:gt_shifts}. We can observe that the shifts are small on all three datasets, demonstrating that while absolute embedding values may differ, the relative rankings derived by different cell embedding functions are similar. Notably, MPNet exhibits smaller shifts than Word2vec, indicating that the default fastText is closer to the MPNet than Word2vec.

\begin{table} \small
\centering
\caption{ Performance of E5-base-4k. $\Delta$ indicates the difference relative to the best-performing 512-token-limit PTMs.} 
\label{tab:long_ctx}
\vspace{-2mm}
\renewcommand{\arraystretch}{1.2} 
\setlength{\tabcolsep}{0.9mm}{
\begin{tabular}{c|cc|cc|cc} 
\specialrule{.12em}{.06em}{.06em}
\multicolumn{1}{c|}{\multirow{2}{*}{\textbf{Model}}} & \multicolumn{2}{c}{\textbf{WikiTable}} & \multicolumn{2}{c}{\textbf{Opendata}} & \multicolumn{2}{c}{\textbf{WDC Small}}  \\
\cline{2-7}
                         & \textbf{R@25}     & \textbf{N@25}               & \textbf{R@25}   & \textbf{N@25}                & \textbf{R@25}   & \textbf{N@25}                 \\ 
\hline
\multicolumn{1}{c|} {E5-base-4k}            &  0.5488	& 0.7306 & 0.8416	& 0.9165 & 0.6424 &	0.8646 \\
\hline
  $\Delta$          &  $\downarrow$0.0584      & $\downarrow$0.0285    &  $\uparrow$0.1196    & $\uparrow$0.0512                 &    $\downarrow$0.0616        & $\downarrow$0.0508                                            \\ 
\specialrule{.12em}{.06em}{.06em}
\end{tabular}}
\label{tab:ablation}
\vspace{-3mm}
\end{table}

\begin{table}[t] \small
\centering
\caption{ Online efficiency (ms) comparison of E5-base-4k and \textsf{Snoopy}.} 
\vspace{-2mm}
\label{tab:online_efficiency_E5}
\renewcommand{\arraystretch}{1.1} 
\begin{threeparttable}
\setlength{\tabcolsep}{0.9mm}{
\begin{tabular}{c|c|ccccc} 
\specialrule{.12em}{.06em}{.06em}
\multirow{2}{*}{\textbf{Methods}} & \multirow{2}{*}{\begin{tabular}[c]{@{}c@{}}query column \\ encoding  \end{tabular}} & \multicolumn{5}{c}{total online processing}                      \\ 
\cline{3-7}
                        &                                 & 100K      & 200K      & 300K  & 500K  & 1M     \\ 
\hline
E5-base-4k                  & 21.23                           & 25.12 & 25.49 &   25.38    &  25.67     &   25.71    \\
 
\hline
\textsf{Snoopy}               & 1.10                            & \textbf{5.10}      & \textbf{4.91}      & \textbf{5.00}  & \textbf{5.17}  & \textbf{4.97}   \\
\specialrule{.12em}{.06em}{.06em}
\end{tabular}}
\end{threeparttable}
\vspace{-6mm}
\end{table}

\begin{table}[t]
\small
\centering
\caption{ Performance of different cell sampling strategies applied to DeepJoin. The best results are in bold.} 
\label{tab:tfidf}
\vspace{-2mm}
\renewcommand{\arraystretch}{1.2} 
\setlength{\tabcolsep}{0.9mm}{
\begin{tabular}{c|cc|cc|cc} 
\specialrule{.12em}{.06em}{.06em}
\multicolumn{1}{c|}{\multirow{2}{*}{\textbf{Strategies}}} & \multicolumn{2}{c}{\textbf{WikiTable}} & \multicolumn{2}{c}{\textbf{Opendata}} & \multicolumn{2}{c}{\textbf{WDC Small}}  \\
\cline{2-7}
                         & \textbf{R@25}     & \textbf{N@25}               & \textbf{R@25}   & \textbf{N@25}                & \textbf{R@25}   & \textbf{N@25}                 \\ 

\hline
  {Frequency}          &  0.5600      & 0.6984    &  \textbf{0.8188}    & \textbf{0.9040} &      \textbf{0.7240} & \textbf{0.9170}                     \\ 
\hline
\multicolumn{1}{c|} {TF-IDF}       &  0.5608	 & 0.7061       &   0.8128 &	0.9016	    & 0.7104	& 0.9078 \\

  \hline
\multicolumn{1}{c|} {BM25}            &  \textbf{0.5611}	& \textbf{0.7066} 	& 0.8129 &	0.8955  &   0.7112	 & 0.9079	 \\
 
\specialrule{.12em}{.06em}{.06em}
\end{tabular}}
\label{tab:ablation}
\vspace{-3mm}
\end{table}

\noindent{ (ii){\underline{\textit{ Effectiveness in the dynamic scenario}}}.} To simulate a dynamic scenario where data are constantly added while fixing the number of proxy column matrices, we create an initial repository $\mathcal{R}_0$ by randomly sampling 70\% of the columns from the original dataset.
The remaining 30\%  are evenly divided into three batches $\mathcal{B}_1$, $\mathcal{B}_2$, and $\mathcal{B}_3$, which are incrementally added to the repository.
We denote the repository after adding the first $j$ batches as  $\mathcal{R}_j = \mathcal{R}_0 \cup \dots \cup \mathcal{B}_j (j \geq 1)$.
Table~\ref{tab:dynamic} presents the accuracy of \textsf{Snoopy} in the dynamic scenario. 
We observe some fluctuations of  R@$k$ in different sizes of data corpus. This is because the metric R@$k$ can be influenced by the cut-off errors due to the constraint of the fixed value of $k$, as mentioned in Sec.~\ref{sec:exp_effectiveness}. However, \textsf{Snoopy} demonstrates relatively stable and high N@$k$ performance as new columns are added, highlighting its effectiveness in dynamic scenarios. We also observe that as more data are added, N@$5$ decreases slightly, while N@$25$ increases. 
This arises from the sparsity of joinable columns in the data repository. Adding more data enhances the likelihood of identifying additional joinable columns, improving N@$25$. However, this also increases the difficulty of identifying the most joinable columns, resulting in a slight decrease in N@$5$.


\noindent{(iii) {\underline{\textit{PTM-based methods optimized for size limits}}}.} First, we apply the long-context model E5-Base-4k~\cite{E5-Base-4k}, which supports 4k tokens, to encode columns. The results compared with the best-performing 512-token-limit models are shown in Table~\ref{tab:long_ctx}.
As observed, the accuracy increases on Opendata but decreases on the other two datasets. This is because,  while the long-context model overcomes the size limit, the semantics-joinability gap persists. As more cells are added, non-matching cells may dominate the column's embedding, resulting in reduced accuracy. Furthermore, the long-context model's online encoding is time-consuming, requiring on average 20$\times$ more query column encoding time compared with \textsf{Snoopy}, as shown in Table~\ref{tab:online_efficiency_E5}.
Since the sentence-transformers\footnote{https://huggingface.co/sentence-transformers} does not yet support long-context models, we test the accuracy using the pre-trained E5-Base-4k, expecting similar trends if applied to DeepJoin.
Second, we explore TF-IDF and BM25 to sample cells within each column to ensure the 512-token limit for DeepJoin. The results compared with the original DeepJoin (frequency-based) are shown in Table~\ref{tab:tfidf}. As observed, the impact of different strategies on accuracy is quite limited. This is because these sampling strategies are designed for text similarity and information retrieval tasks, which do not align well with the definition of joinability.

Based on these observations, we compare the pros and cons of \textsf{Snoopy} versus DeepJoin with a long-context encoder. Both \textsf{Snoopy} and long-context-model-powered DeepJoin can overcome size limitations, however, \textsf{Snoopy} has two advantages: (i) it effectively bridges the semantics-joinability gap through proxy column matrices, whereas the long-context model struggles with this gap and may even exacerbate it; and (ii) \textsf{Snoopy} is much more efficient than long-context-model during online search. The primary limitation of \textsf{Snoopy} is its reliance on the cell embedding function and join definition, whereas DeepJoin offers greater flexibility by adapting to various PTMs and join definitions.

\section{Related Work}
\label{sec:relatedwork}
\subsection{Dataset Discovery} 
Dataset discovery has been widely studied in the data management community~\cite{paton2023dataset, TabelDiscovery}, with table search  as the primary application. 
The main sub-tasks of table search are joinable table search and table union search.

\textbf{Joinable table search.}
To support joinable table search, most studies~\cite{Aurum, DatasetDiscovery,LSH,JOSIE,CrossDataDis,Correlation,MATE} focus on equi-join and utilize syntactic similarity measures to determine joinanility between columns. 
To take semantics into consideration, PEXESO~\cite{Pexeso} proposes a semantic joinability measure, and designs a cell-level exact algorithm under this measure using word embeddings. To enhance efficiency, the following column-level solutions, such as DeepJoin~\cite{Deepjoin} and WarpGate~\cite{WarpGate}, perform coarser computation at column-level to approximate the results of cell-level solutions. However, the effectiveness is poor due to the suboptimal column embeddings. In contrast, our $\textsf{Snoopy}$ is an effective column-level framework powered by the proxy-column-based column embeddings.
Recent works~\cite{koios,SilkMoth} on semantic overlap set search are related to join discovery, but adopt a different semantic overlap (join) measure from the previous studies~\cite{Deepjoin,Pexeso} and ours. 
OmniMatch~\cite{omnimatch}, a concurrent work with ours, detects both equi-joins and fuzzy-joins by combining multiple similarity measures. However, it views join discovery as an offline procedure~\cite{omnimatch}, unlike online procedures where high efficiency is a crucial demand.

\textbf{Table union search.} The goal of table union search is to find tables that can be unioned with the query table to extend it with tuples. TUS~\cite{TUS} defines table union search based on attribute unionability, and formalizes three probabilistic models to describe how unionable attributes are
generated from different domains.
D3L~\cite{DatasetDiscovery} further adds in measures that include formatting similarity and attribute names. SANTOS~\cite{santos}
considers the relationships between columns and uses a knowledge base to identify  unionable tables. Starmie~\cite{starmine} extends the notion of capturing binary relationships to use the context of the table to determine union-ability.

\subsection{Table Representation}
Many researchers are exploring how to represent tabular data (i.e., structured data) with neural models~\cite{tableembed, badaro2023transformers}. Due to the huge success in natural language processing (NLP), pre-trained language models (e.g., BERT~\cite{bert}, SBERT~\cite{sentencebert}, E5-base~\cite{E5-Base-4k}, etc.) have been widely applied to represent different levels of tabular data, including entity matching (row-level)~\cite{camper,ditto}, column type annotation (column-level)~\cite{doduo,Watchog}, etc. To model the row-and-column structure as well as integrate the heterogeneous information from different components of tables, transformer-based table embedding models have been proposed, such as TURL~\cite{turl}, TaBERT~\cite{tabert}, TAPAS~\cite{tapas}, etc. These models are based on Transformer architecture, and thus, enforce a length limit to token sequences (e.g. 512) due to the high computational complexity of the self-attention mechanism~\cite{attention}. In contrast, our proposed column representation is size unlimited, which can well handle the long columns in the real table repositories.

\subsection{Contrastive Learning}
Contrastive learning~\cite{Moco} (CL) is a discriminative approach that aims to pull similar samples closer and push apart dissimilar ones in the embedding space, and has achieved huge success in diverse domains.
In data discovery and preparation, CL is an effective method for learning high-quality data representations. Pylon~\cite{Pylon} and Starmie~\cite{starmine} leverage CL to learn column representations for table discovery. Ember~\cite{Ember}  enables a general keyless join operator by constructing an index populated with task-specific embeddings via CL. Sudowoodo~\cite{Sudowoodo} applies CL to learn entity, column, and cell representations to address multiple tasks in data preparation. Instead of directly learning the data items, in this paper, we leverage CL to learn proxy column matrices, which are then used to derive column representations efficiently. In contrast to the standard CL that requires a strict binary separation of the training pairs into similar and dissimilar samples, RINCE~\cite{2022ranking} first proposes a new mechanism that preserves a ranked ordering of positive samples. Inspired by that, we incorporate rank awareness into the pivot column matrix learning process, focusing on designing a data generation method to synthesize ranked joinable columns to enable rank-aware CL.

\section{Conclusions}
\label{sec:conlusion}
This paper proposes \textsf{Snoopy}, an effective and efficient semantic join discovery framework powered by proxy columns.
We devise an approximate-graph-matching-based column projection function to capture column-to-proxy-column relationships, ensuring size-unlimited and permutation-invariant column representations.
To acquire good proxy columns, we present a rank-aware contrastive learning paradigm to learn proxy column matrices for embedding pre-computing and online query encoding.
Extensive experiments on four real-world table repositories demonstrate the superiority of \textsf{Snoopy} in both effectiveness and efficiency. In the future, we plan to consider the downstream data analysis and study the task-oriented join discovery.
\appendices

\section{Time complexity analysis}
\label{appendix:A}
We denote the cardinality of proxy column set  as $l$ and the cardinality per proxy column as $m$. During the offline stage, the complexity of pre-computing all the column embeddings is $\mathcal{O}(ml|\overline{C}||\mathcal{R}|) = \mathcal{O}(|\overline{C}||\mathcal{R}|)$, where $|\mathcal{R}|$ denotes the number of columns in the repository, and $|\overline{C}|$ denotes the average size of columns in the repository; and the time complexity of index construction using HNSW is $\mathcal{O}(|\mathcal{R}| \operatorname{log}|\mathcal{R}|)$. During the online stage, the complexity of computing the query column embedding is $\mathcal{O}(ml|C_Q|) = \mathcal{O}(|C_Q|)$, where $|C_Q|$ is the size of the query column $C_Q$; and the time complexity of ANN search is  $\mathcal{O}(\operatorname{log}|\mathcal{R}|)$. 
Consequently, the overall time complexity of online processing is $\mathcal{O}(|C_Q| + \operatorname{log}|\mathcal{R}|)$.

\section{Effectiveness of equi-join discovery}
\label{appendix:B}
To demonstrate the effectiveness of \textsf{Snoopy} in equi-join search, we compare its R@25 with Deepjoin, which can also deal with approximate equi-join discovery.
We omit the exact algorithms such as JOSIE~\cite{JOSIE} and MATE~\cite{MATE}, as their accuracy is 1. The approximate algorithm like LSH Ensemble~\cite{LSH} is also excluded, as it has been validated that Deepjoin outperforms LSH Ensemble in equi-join search~\cite{Deepjoin}. The results are shown in Table~\ref{tab:equi-join}. We can see that the accuracy of $\textsf{Scorpion}$  also outperforms Deepjoin.

\begin{table}[h]
\small
\centering
\caption{Accuracy of equi-join search. The best are in bold.}
\vspace{-3mm}
\renewcommand{\arraystretch}{1.2} 
\setlength{\tabcolsep}{1.4mm}{
\begin{tabular}{l|cc|cc|cc} 
\specialrule{.12em}{.06em}{.06em}
\multicolumn{1}{c|}{\multirow{2}{*}{\textbf{Methods}}} & \multicolumn{2}{c}{\textbf{WikiTable}} & \multicolumn{2}{c}{\textbf{Opendata}} & \multicolumn{2}{c}{\textbf{WDC Small}}  \\
\cline{2-7}
                         & \textbf{R@25}     & \textbf{N@25}               & \textbf{R@25}   & \textbf{N@25}                & \textbf{R@25}   & \textbf{N@25}                 \\ 
\hline
\multicolumn{1}{c|} {Deepjoin}  & 0.4984 &  0.6498   & 0.8168 &  0.8992    &   0.6960 &  0.9026                   \\ 
  {$\textsf{Snoopy}_\text{bs}$}          &   0.7288     &   0.8985   &   0.8380    &   0.9297               &      0.7840     &       0.9474                                       \\
  {\textsf{Snoopy}}                       &   \textbf{0.7416}      &    \textbf{0.9077}               &   \textbf{0.8576}   &    \textbf{0.9449}               &    \textbf{0.8096}    &     \textbf{0.9632}                 \\
\specialrule{.12em}{.06em}{.06em}
\end{tabular}}
\label{tab:equi-join}
\vspace{-3mm}
\end{table}

\newpage
\section{Offline Efficiency Evaluation}
\label{appendix:C}
The runtime (in minutes) of the offline process is shown in Table~\ref{tab:offline_efficiency}. We can observe that $\textsf{Snoopy}$ exhibits the shortest per-epoch training time due to its lightweight AGM-based column mapping. PEXESO requires the least encoding time, as it simply invokes fastText to encode each cell within the query column.
The encoding time of Snoopy is also short
due to the lightweight AGM-based column projection.  Since all column-level methods employ HNSW for indexing, their indexing times are comparable. The indexing time of PEXESO is long 
due to the necessity of indexing the mapped vectors of all cells in hierarchical grids~\cite{Pexeso}.

\begin{table}[h]
\small
\centering
\caption{Offline processing time (min). The training is evaluated on WDC Small while encoding and indexing are on WDC Large.}
\vspace{-2mm}
\renewcommand{\arraystretch}{1.1} 
\begin{threeparttable}
\setlength{\tabcolsep}{3.3mm}{
\begin{tabular}{c|c|c|c} 
\specialrule{.12em}{.06em}{.06em}
\textbf{Methods}  & training/epoch & encoding & indexing  \\ 
\hline
PEXESO   &      --             &   \textbf{8.75}        &       298.44    \\
WarpGate &      --             &    115.98        &    0.29        \\
BERT*    &       36.54      &    110.01        &     0.32        \\
Starmie &         41.60   &   128.54       &    0.35          \\
Deepjoin &         70.46   &   120.52        &    0.39          \\ 
\hline
\textsf{Snoopy} &   \textbf{2.03} &      14.81              &     \textbf{0.28}           \\
\specialrule{.12em}{.06em}{.06em}
\end{tabular}}
\end{threeparttable}
    \vspace{-2mm}
    \label{tab:offline_efficiency}
\end{table}



 
%

\bibliographystyle{IEEEtran}
\bibliography{IEEEabrv,sample-base}

\begin{thebibliography}{10}
\providecommand{\url}[1]{#1}
\csname url@samestyle\endcsname
\providecommand{\newblock}{\relax}
\providecommand{\bibinfo}[2]{#2}
\providecommand{\BIBentrySTDinterwordspacing}{\spaceskip=0pt\relax}
\providecommand{\BIBentryALTinterwordstretchfactor}{4}
\providecommand{\BIBentryALTinterwordspacing}{\spaceskip=\fontdimen2\font plus
\BIBentryALTinterwordstretchfactor\fontdimen3\font minus \fontdimen4\font\relax}
\providecommand{\BIBforeignlanguage}[2]{{%
\expandafter\ifx\csname l@#1\endcsname\relax
\typeout{** WARNING: IEEEtran.bst: No hyphenation pattern has been}%
\typeout{** loaded for the language `#1'. Using the pattern for}%
\typeout{** the default language instead.}%
\else
\language=\csname l@#1\endcsname
\fi
#2}}
\providecommand{\BIBdecl}{\relax}
\BIBdecl

\bibitem{WarpGate}
T.~Cong, J.~Gale, J.~Frantz, H.~V. Jagadish, and {\c{C}}.~Demiralp, ``Warpgate: {A} semantic join discovery system for cloud data warehouses,'' in \emph{CIDR}, 2023.

\bibitem{TabelDiscovery}
G.~Fan, J.~Wang, Y.~Li, and R.~J. Miller, ``Table discovery in data lakes: State-of-the-art and future directions,'' in \emph{SIGMOD}, 2023, pp. 69--75.

\bibitem{semanticjoin}
Y.~He, K.~Ganjam, and X.~Chu, ``{SEMA-JOIN:} joining semantically-related tables using big table corpora,'' \emph{Proc. {VLDB} Endow.}, vol.~8, no.~12, pp. 1358--1369, 2015.

\bibitem{autofuzzyjoin}
P.~Li, X.~Cheng, X.~Chu, Y.~He, and S.~Chaudhuri, ``Auto-fuzzyjoin: Auto-program fuzzy similarity joins without labeled examples,'' in \emph{SIGMOD}, 2021, pp. 1064--1076.

\bibitem{Auto-Join}
E.~Zhu, Y.~He, and S.~Chaudhuri, ``Auto-join: Joining tables by leveraging transformations,'' \emph{Proc. {VLDB} Endow.}, vol.~10, no.~10, pp. 1034--1045, 2017.

\bibitem{JOSIE}
E.~Zhu, D.~Deng, F.~Nargesian, and R.~J. Miller, ``{JOSIE:} overlap set similarity search for finding joinable tables in data lakes,'' in \emph{SIGMOD}, 2019, pp. 847--864.

\bibitem{LSH}
E.~Zhu, F.~Nargesian, K.~Q. Pu, and R.~J. Miller, ``{LSH} ensemble: Internet-scale domain search,'' \emph{Proc. {VLDB} Endow.}, vol.~9, no.~12, pp. 1185--1196, 2016.

\bibitem{DatasetDiscovery}
A.~Bogatu, A.~A.~A. Fernandes, N.~W. Paton, and N.~Konstantinou, ``Dataset discovery in data lakes,'' in \emph{ICDE}, 2020, pp. 709--720.

\bibitem{CrossDataDis}
M.~Y. Eltabakh, M.~Kunjir, A.~K. Elmagarmid, and M.~S. Ahmad, ``Cross modal data discovery over structured and unstructured data lakes,'' \emph{Proc. {VLDB} Endow.}, vol.~16, no.~11, pp. 3377--3390, 2023.

\bibitem{Pexeso}
Y.~Dong, K.~Takeoka, C.~Xiao, and M.~Oyamada, ``Efficient joinable table discovery in data lakes: {A} high-dimensional similarity-based approach,'' in \emph{ICDE}, 2021, pp. 456--467.

\bibitem{Deepjoin}
Y.~Dong, C.~Xiao, T.~Nozawa, M.~Enomoto, and M.~Oyamada, ``Deepjoin: Joinable table discovery with pre-trained language models,'' \emph{Proc. {VLDB} Endow.}, vol.~16, no.~10, pp. 2458--2470, 2023.

\bibitem{Watchog}
Z.~Miao and J.~Wang, ``Watchog: {A} light-weight contrastive learning based framework for column annotation,'' \emph{Proc. {ACM} Manag. Data}, vol.~1, no.~4, pp. 272:1--272:24, 2023.

\bibitem{tableembed}
T.~Cong, M.~Hulsebos, Z.~Sun, P.~Groth, and H.~V. Jagadish, ``Observatory: Characterizing embeddings of relational tables,'' \emph{Proc. {VLDB} Endow.}, vol.~17, no.~4, pp. 849--862, 2024.

\bibitem{NargesianZMPA19}
F.~Nargesian, E.~Zhu, R.~J. Miller, K.~Q. Pu, and P.~C. Arocena, ``Data lake management: Challenges and opportunities,'' \emph{Proc. {VLDB} Endow.}, vol.~12, no.~12, pp. 1986--1989, 2019.

\bibitem{HNSW}
Y.~A. Malkov and D.~A. Yashunin, ``Efficient and robust approximate nearest neighbor search using hierarchical navigable small world graphs,'' \emph{TPAMI}, vol.~42, no.~4, pp. 824--836, 2020.

\bibitem{fasttext}
\BIBentryALTinterwordspacing
F.~A.~R. Lab., ``fasttext: Library for efcient text classifcation and representation learning,'' 2015. [Online]. Available: \url{https://fasttext.cc/}
\BIBentrySTDinterwordspacing

\bibitem{sentencebert}
N.~Reimers and I.~Gurevych, ``Sentence-bert: Sentence embeddings using siamese bert-networks,'' in \emph{EMNLP-IJCNLP}, 2019, pp. 3980--3990.

\bibitem{camper}
Y.~Guo, L.~Chen, Z.~Zhou, B.~Zheng, Z.~Fang, Z.~Zhang, Y.~Mao, and Y.~Gao, ``Camper: An effective framework for privacy-aware deep entity resolution,'' in \emph{SIGKDD}.\hskip 1em plus 0.5em minus 0.4em\relax {ACM}, 2023, pp. 626--637.

\bibitem{ditto}
Y.~Li, J.~Li, Y.~Suhara, A.~Doan, and W.~Tan, ``Deep entity matching with pre-trained language models,'' \emph{Proc. {VLDB} Endow.}, vol.~14, no.~1, pp. 50--60, 2020.

\bibitem{SilkMoth}
D.~Deng, A.~Kim, S.~Madden, and M.~Stonebraker, ``Silkmoth: An efficient method for finding related sets with maximum matching constraints,'' \emph{Proc. {VLDB} Endow.}, vol.~10, no.~10, pp. 1082--1093, 2017.

\bibitem{TokenJoin}
A.~Zeakis, D.~Skoutas, D.~Sacharidis, O.~Papapetrou, and M.~Koubarakis, ``Tokenjoin: Efficient filtering for set similarity join with maximum weighted bipartite matching,'' \emph{Proc. {VLDB} Endow.}, vol.~16, no.~4, pp. 790--802, 2022.

\bibitem{ZhuCGJ22}
Y.~Zhu, L.~Chen, Y.~Gao, and C.~S. Jensen, ``Pivot selection algorithms in metric spaces: a survey and experimental study,'' \emph{{VLDB} J.}, vol.~31, no.~1, pp. 23--47, 2022.

\bibitem{Moco}
K.~He, H.~Fan, Y.~Wu, S.~Xie, and R.~B. Girshick, ``Momentum contrast for unsupervised visual representation learning,'' in \emph{CVPR}, 2020, pp. 9726--9735.

\bibitem{rank}
D.~T. Hoffmann, N.~Behrmann, J.~Gall, T.~Brox, and M.~Noroozi, ``Ranking info noise contrastive estimation: Boosting contrastive learning via ranked positives,'' in \emph{AAAI}, vol.~36, no.~1, 2022, pp. 897--905.

\bibitem{WangXY021}
M.~Wang, X.~Xu, Q.~Yue, and Y.~Wang, ``A comprehensive survey and experimental comparison of graph-based approximate nearest neighbor search,'' \emph{Proc. {VLDB} Endow.}, vol.~14, no.~11, pp. 1964--1978, 2021.

\bibitem{starmine}
G.~Fan, J.~Wang, Y.~Li, D.~Zhang, and R.~J. Miller, ``Semantics-aware dataset discovery from data lakes with contextualized column-based representation learning,'' \emph{Proc. {VLDB} Endow.}, vol.~16, no.~7, pp. 1726--1739, 2023.

\bibitem{Wikidataset}
C.~S. Bhagavatula, T.~Noraset, and D.~Downey, ``Tabel: Entity linking in web tables,'' in \emph{ISWC}, vol. 9366, 2015, pp. 425--441.

\bibitem{santos}
A.~Khatiwada, G.~Fan, R.~Shraga, Z.~Chen, W.~Gatterbauer, R.~J. Miller, and M.~Riedewald, ``{SANTOS:} relationship-based semantic table union search,'' \emph{Proc. {ACM} Manag. Data}, vol.~1, no.~1, pp. 9:1--9:25, 2023.

\bibitem{0002TGL21}
M.~G{\"{u}}nther, M.~Thiele, J.~Gonsior, and W.~Lehner, ``Pre-trained web table embeddings for table discovery,'' in \emph{aiDM}, 2021, pp. 24--31.

\bibitem{turl}
X.~Deng, H.~Sun, A.~Lees, Y.~Wu, and C.~Yu, ``{TURL:} table understanding through representation learning,'' \emph{Proc. {VLDB} Endow.}, vol.~14, no.~3, pp. 307--319, 2020.

\bibitem{bert}
J.~Devlin, M.~Chang, K.~Lee, and K.~Toutanova, ``{BERT:} pre-training of deep bidirectional transformers for language understanding,'' in \emph{NAACL-HLT}, 2019, pp. 4171--4186.

\bibitem{mpnet}
K.~Song, X.~Tan, T.~Qin, J.~Lu, and T.~Liu, ``Mpnet: Masked and permuted pre-training for language understanding,'' in \emph{NeurIPS}, 2020, pp. 16\,857--16\,867.

\bibitem{adam}
D.~P. Kingma and J.~Ba, ``Adam: {A} method for stochastic optimization,'' in \emph{ICLR}, 2015.

\bibitem{fft}
D.~S. Hochbaum and D.~B. Shmoys, ``A best possible heuristic for the \emph{k}-center problem,'' \emph{Math. Oper. Res.}, vol.~10, no.~2, pp. 180--184, 1985.

\bibitem{pca}
R.~Mao, W.~L. Miranker, and D.~P. Miranker, ``Pivot selection: Dimension reduction for distance-based indexing,'' \emph{J. Discrete Algorithms}, vol.~13, pp. 32--46, 2012.

\bibitem{FCS}
L.~Li, C.~Wang, M.~Qiu, C.~Chen, M.~Gao, and A.~Zhou, ``Accelerating {BERT} inference with gpu-efficient exit prediction,'' \emph{Frontiers Comput. Sci.}, vol.~18, no.~3, p. 183308, 2024.

\bibitem{Spearman}
C.~Spearman, ``The proof and measurement of association between two things,'' \emph{International journal of epidemiology}, vol.~39, no.~5, pp. 1137--1150, 2010.

\bibitem{E5-Base-4k}
D.~Zhu, L.~Wang, N.~Yang, Y.~Song, W.~Wu, F.~Wei, and S.~Li, ``Longembed: Extending embedding models for long context retrieval,'' \emph{arXiv preprint arXiv:2404.12096}, 2024.

\bibitem{paton2023dataset}
N.~W. Paton, J.~Chen, and Z.~Wu, ``Dataset discovery and exploration: A survey,'' \emph{ACM Computing Surveys}, vol.~56, no.~4, pp. 1--37, 2023.

\bibitem{Aurum}
R.~C. Fernandez, Z.~Abedjan, F.~Koko, G.~Yuan, S.~Madden, and M.~Stonebraker, ``Aurum: {A} data discovery system,'' in \emph{ICDE}, 2018, pp. 1001--1012.

\bibitem{Correlation}
A.~S.~R. Santos, A.~Bessa, F.~Chirigati, C.~Musco, and J.~Freire, ``Correlation sketches for approximate join-correlation queries,'' in \emph{SIGMOD}, 2021, pp. 1531--1544.

\bibitem{MATE}
M.~Esmailoghli, J.~Quian{\'{e}}{-}Ruiz, and Z.~Abedjan, ``{MATE:} multi-attribute table extraction,'' \emph{Proc. {VLDB} Endow.}, vol.~15, no.~8, pp. 1684--1696, 2022.

\bibitem{koios}
P.~Mundra, J.~Zhang, F.~Nargesian, and N.~Augsten, ``Koios: Top-k semantic overlap set search,'' in \emph{ICDE}, 2023, pp. 1531--1543.

\bibitem{omnimatch}
C.~Koutras, J.~Zhang, X.~Qin, C.~Lei, V.~Ioannidis, C.~Faloutsos, G.~Karypis, and A.~Katsifodimos, ``Omnimatch: Effective self-supervised any-join discovery in tabular data repositories,'' \emph{arXiv preprint arXiv:2403.07653}, 2024.

\bibitem{TUS}
F.~Nargesian, E.~Zhu, K.~Q. Pu, and R.~J. Miller, ``Table union search on open data,'' \emph{Proc. {VLDB} Endow.}, vol.~11, no.~7, pp. 813--825, 2018.

\bibitem{badaro2023transformers}
G.~Badaro, M.~Saeed, and P.~Papotti, ``Transformers for tabular data representation: A survey of models and applications,'' \emph{TACL}, vol.~11, pp. 227--249, 2023.

\bibitem{doduo}
Y.~Suhara, J.~Li, Y.~Li, D.~Zhang, {\c{C}}.~Demiralp, C.~Chen, and W.~Tan, ``Annotating columns with pre-trained language models,'' in \emph{SIGMOD}, 2022, pp. 1493--1503.

\bibitem{tabert}
P.~Yin, G.~Neubig, W.-t. Yih, and S.~Riedel, ``Tabert: Pretraining for joint understanding of textual and tabular data,'' in \emph{ACL}, 2020, pp. 8413--8426.

\bibitem{tapas}
J.~Herzig, P.~K. Nowak, T.~Mueller, F.~Piccinno, and J.~Eisenschlos, ``Tapas: Weakly supervised table parsing via pre-training,'' in \emph{ACL}, 2020, pp. 4320--4333.

\bibitem{attention}
A.~Vaswani, N.~Shazeer, N.~Parmar, J.~Uszkoreit, L.~Jones, A.~N. Gomez, L.~Kaiser, and I.~Polosukhin, ``Attention is all you need,'' in \emph{NIPS}, 2017, pp. 5998--6008.

\bibitem{Pylon}
T.~Cong, F.~Nargesian, and H.~V. Jagadish, ``Pylon: Semantic table union search in data lakes,'' \emph{CoRR}, vol. abs/2301.04901, 2023.

\bibitem{Ember}
\BIBentryALTinterwordspacing
S.~Suri, I.~F. Ilyas, C.~R{\'{e}}, and T.~Rekatsinas, ``Ember: No-code context enrichment via similarity-based keyless joins,'' \emph{Proc. {VLDB} Endow.}, vol.~15, no.~3, pp. 699--712, 2021. [Online]. Available: \url{http://www.vldb.org/pvldb/vol15/p699-suri.pdf}
\BIBentrySTDinterwordspacing

\bibitem{Sudowoodo}
R.~Wang, Y.~Li, and J.~Wang, ``Sudowoodo: Contrastive self-supervised learning for multi-purpose data integration and preparation,'' in \emph{ICDE}, 2023, pp. 1502--1515.

\bibitem{2022ranking}
D.~T. Hoffmann, N.~Behrmann, J.~Gall, T.~Brox, and M.~Noroozi, ``Ranking info noise contrastive estimation: Boosting contrastive learning via ranked positives,'' in \emph{AAAI}, vol.~36, no.~1, 2022, pp. 897--905.

\end{thebibliography}

\vfill

\end{document}